\newtheorem{theorem}{Theorem}[section]
\newtheorem{observation}[theorem]{Observation}
\newtheorem{definition}[theorem]{Definition}
\newtheorem{lemma}[theorem]{Lemma}
\newtheorem{corollary}[theorem]{Corollary}
\newtheorem{problem}[theorem]{Problem}
\title{Recognising the overlap graphs of subtrees of restricted trees is hard}
\author{Jessica Enright \affiliationmark{1} and Martin Pergel\affiliationmark{2}\thanks{
Partially supported by the Czech Science Foundation grant GA19-08554S.}}
\date{\today}
\affiliation{
University of Edinburgh, Easter Bush, Midlothian, UK.\\
Department of Software and Computer Science Education (KSVI), Charles University, Prague, Czech Republic. 
}
\begin{document}
%\addtolength{\abovecaptionskip}{-5mm}
%
% this makes list spacing much better.
%
\newenvironment{my_enumerate}{
\begin{enumerate}
  \setlength{\itemsep}{1pt}
  \setlength{\parskip}{0pt}
  \setlength{\parsep}{0pt}}{\end{enumerate}
}
\newenvironment{my_itemize}{
\begin{itemize}
  \setlength{\itemsep}{1pt}
  \setlength{\parskip}{0pt}
  \setlength{\parsep}{0pt}}{\end{itemize}
}
\long\def\fixme#1{{\bf FIXME!!!} #1}
%\long\def\fixme#1{}
%\fixme{Pay attention to $v_s^2$ and $v_b^2$. Change them all to $v_{s'}$ and $v_{b'}$. Change should be performed, ask Jessica to change it in her pictures.}
\maketitle
\begin{abstract}
The overlap graphs of subtrees in a tree (SOGs) generalise many other graphs classes with set representation characterisations.   The complexity of recognising SOGs in open.  The complexities of recognising many subclasses of SOGs are known.  
We 
consider several subclasses of SOGs by restricting the underlying tree.   For a fixed integer $k \geq 3$, we consider:
\begin{my_itemize}
 \item The overlap graphs of subtrees in a tree where that tree has $k$ leaves
 \item The overlap graphs of subtrees in trees that can be derived from a given input tree by subdivision and have at least 3 leaves
 \item The overlap and intersection graphs of paths in a tree where that tree has maximum degree $k$
\end{my_itemize}

We show that  the recognition problems of these classes are NP-complete. For all other parameters we get circle graphs, well known to be polynomially recognizable.
\end{abstract}
		%OK
\section{Introduction}
Intersection graphs of geometric objects 
are both theoretically and practically important. 
Their uses include applications
in VLSI-circuit design and ecology.
A graph $G=(V,E)$ with a vertex set $V=\{v_1,\ldots v_n\}$ and the
edge set $E=\{e_1,\ldots, e_m\}$ is an intersection graph of a set
system $\{s_1,\ldots s_n\}$, where for all $i$, $s_i\subseteq \cal S$,
each vertex $v_i$ corresponds
to a set $s_i$ and each edge $e=(v_i,v_j)$ is equivalent to the
fact that $s_i\cap s_j\not= \emptyset$. Intersection graphs -- particularly geometric intersection graphs, where the sets are defined in a geometric way and the set relationship corresponding to adjacency is containment, intersection, or overlapping -- have been explored for many years and research includes characterisations of many classes defined int his way, as well as algorithms exploiting their representations.  As the intersection graphs are very well known, in the introduction we focus mainly on other classes, first mentioning results on containment graphs and then focussing on overlap graphs.  Before doing that, let us yet notice that this text is a full version of the extended abstract \cite{extabstr} and its main goal (among improved introduction) is to show the details of the constructions.

\subsection{Containment graphs}

A possible type of similarly defined classes are containment graphs, where, again, sets represent the vertices and an edge corresponds to the fact that one set is a subset of the other. This way of representation has interesting consequences for poset-theory and it was used, for example, to show that the recognition of posets of dimension 3 and height 2 is hard \cite{FMP}. 

Further possibility of exploring posets is via subtree-containment graphs. Obviously, each poset can be represented as a subtree-containment graph in the following way: We pick its realizer, i.e., minimum set of permutations, whose intersection the poset is and we make a path of each such permutation. Then we pick yet one vertex and make it adjacent to the maxima of all permutations. Each subtree representing an individual vertex is delimited by the vertices (on each permutation) having the appropriate label. An element is above another one, if on all permutations the appropriate label lies above the other one and, therefore, the corresponding subtree is a subtree of the other one, for an illustration, see Figure~\ref{fig:realrepr} showing how to obtain a subtree containment representation from three given linear orders.
\begin{figure}
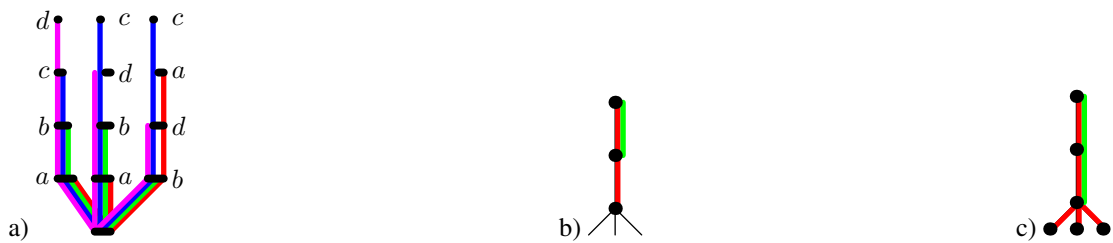

a) \includegraphics{sog.5}\hfill b) \includegraphics{sog.6}\hfill c) \includegraphics{sog.7}
\label{fig:realrepr}
\caption{a) How to get a subtree containment representation from a realizer. b), c) How to extend a subtree not containing a given (branching) node. In order to extend the green one, we extend the red one in the depicted way, too, to avoid making an accidental overlap of the red and green subtree.}
\end{figure}

This construction shows one implication of a theorem that should be available, for example, in \cite{leclerc} saying that posets with dimension at most $k$ are exactly subtree containment graphs in trees with (at most) $k$ leaves. As the referenced article is written in French and the verification of the theorem would require the reader to understand French and to make our article more self-contained, we add yet the sketch of the other implication, as it can be easily observed:

Considering a subtree containment representation, we can derive the realizer. First, let us note that if there is a vertex in the underlying tree that is contained in all subtrees and if no vertex of the underlying tree is a leaf of more than one subtree, we are done (we derive the realizer by traversing from this vertex along paths to individual leaves and the order of leaves for individual subtrees gives us the appropriate permutations). Now, it remains to modify each subtree containment representation into this form. This can be done in two steps. First (that, in fact, we perform as the second one) is that given a representation where two subtrees end, we can change it so that each subtree ends in a different vertex (we add a new vertex on an edge incident with such a vertex and either shrink or extend one of the subtrees -- and this we perform repeatedly while necessary). The last operation we perform to obtain the desired representation is that if not vertex meets our requirements, we pick one of the branching vertices (i.e., a vertex of degree at least 3) and we start extending the trees that do not reach this vertex. The extending can be performed so that we pick a greedily maximal subtree that does not reach this vertex, i.e., no other subtree not yet containing the central vertex ends on the path where this subtree should be extended and then we extend this vertex to contain the central vertex. Again, we may repeat the process as necessary. While reaching the central vertex, we extend the subtrees slightly into all paths stemming from the central vertex in the same way as when disambiguating the endvertices. In such a way we obtain the desired representation. Subtree containment graphs are still being explored \cite{AGG,GL}

\subsection{Overlap graphs}
Similarly, a graph $G=(V,E)$ with a vertex set $V=\{v_1,\ldots v_n\}$ and the
edge set $E=\{e_1,\ldots, e_m\}$ is an overlap graph of a set
system $\{s_1,\ldots s_n\}$, where for all $i$, $s_i\subseteq \cal S$,
each vertex $v_i$ corresponds
to a set $s_i$ and each edge $e=(v_i,v_j)$ is equivalent to the
fact that $s_i\cap s_j\not= \emptyset$ and neither $s_i \subset s_j$ nor $s_j \subset s_i$.

When we consider the overlap and intersection graphs of particular types of set systems, we define graph classes.
Part of the theoretical interest in geometric intersection and overlap  graphs stems from efficient algorithms for otherwise NP-hard problems on these graph classes.
Often, these algorithms require as input a set intersection representation of a particular type.
 Thus we are interested in 
whether or not a given graph has a particular type of intersection representation.
This is called the {\em recognition problem.} Now, let us explain the state of art mainly for intersection graphs, as these are (up to our knowledge) best explored among these three families.

\subsection{Relations between interseciton and overlap graphs}
Probably the oldest intersection-defined graphs are {\em interval
graphs}, the {\em intersection graphs of interval on a line} \cite{GH}.
The interval graphs are generalised by {\em intersection graphs of paths in a tree} \cite{Fel,MonmaW86}.
Intersection graphs of paths in a  tree are in turn generalised by {\em chordal graphs}.
While primarily defined as the graphs without induced cycles of greater than three, chordal graphs are also exactly the {\em intersection graphs of subtrees in a tree} \cite{gavril74}.  
The overlap analogue of chordal graphs is the class of  {\em subtree overlap graphs}, the {\em overlap graphs of subtrees in a tree}.   Subtree overlap graphs generalize many set representation characterized classes, including chordal graphs and therefore interval graphs.

Gavril \cite{gavril2000} defined {\em interval filament graphs} and {\em subtree filament graphs} as intersection graphs of {\em filaments on intervals} and {\em filaments on subtrees}, respectively.
{\em Filaments}
are {\em curves above some geometric structure} (in this case above intervals or subtrees) such that
filaments above disjoint structures must not intersect, while filaments
above overlapping structures ({\em i.e.,} over sets $a$ and $b$ such that
$a\cap b, a\setminus b$ and also $b\setminus a$ are non-empty) must
mutually intersect.

Interval filament graphs are a subclass of subtree overlap graphs, and subtree filament graphs are exactly subtree overlap graphs \cite{Jess}.

Given a set representation, we can solve some otherwise hard problems on these classes, including many problems on chordal graphs \cite{tarjanCan}, and maximum weighted clique and independent set on subtree filament graphs and interval filament graphs \cite{gavril2000}.

Recognising interval filament graphs is known to be hard \cite{gavril2000,Perm}.  In contrast, we can recognise interval graphs and chordal graphs in linear time \cite{corneilOlariuStewart,tarjanCan}, and intersection graphs of paths in a tree in $O(nm)$ time, where $n$ is the number of vertices and $m$ the number of edges in the input graph \cite{AAS}.  
The complexity of recognizing subtree overlap graph is open.

With this in mind, we define three overlap subclasses of subtree overlap graphs:  
we define {\em $k$-SOG} as the {\em overlap graphs of subtrees in a tree such that the tree has at most
$k$-leaves},  class {\em $k$-degree-POG} as the {\em overlap graphs of subpaths in a tree such that the tree has maximum
degree at most $k$}, and the class {\em T-SOG} as the {\em overlap graphs of subtrees of a trees derived from an input tree $T$ by subdivision of edges.}

Though we expect the recognition of subtree overlap graphs to be NP-complete, we expected the recognition of these simplified SOGs to be polynomial time.
We were therefore surprised when 
we obtained hardness results for the recognition problems of 
 $k$-SOG and
$k$-degree-POG for fixed integer $k \geq 3$ and for $T-SOG$ provided that $T$ has at least three leaves.   We present these hardness results in this paper.
The result about $k$-degree-POG also holds for corresponding
intersection graphs; our reduction also shows that it is
NP-complete to 
recognise intersection graphs of paths in a  tree with a fixed maximum degree greater than two.
In contrast,  intersection graphs of subpaths in a tree  can be recognised in polynomial time. \cite{gavril1978}

Before we proceed to the presentation of our results, let us compare our results with other similar classes and let us start with subtree containment graphs: The results presented earlier in the introduction show that subtree containment graphs can be recognised in a polynomial time (\cite{McCSpin} as for a given graph we can decide in a linear time whether it has a transitive orientation and therefore whether it is a poset). Whether a graph can be represented as a subtree containment graph in a tree with at most $k$ leaves is linear for $k=2$ \cite{McCSpin,Spin} and NP-complete otherwise - even for triangle-free graphs \cite{YAN,FMP}. Recognition of containment graphs subpaths in a tree is yet open \cite{GL} even without restriction on maximum degree of the underlying tree.

Our result on the hardness of recognizing the subtree overlap graph with $k$ leafage for fixed integer $k \geq 3$ provides a counterpoint to %the current state-of-the-art 
work on the intersection leafage of chordal graphs.  
Stacho and Habib \cite{HS} give a polynomial-time algorithm for determining the leafage of a chordal graph and constructing a representation that achieves that leafage. Leafage was further explored by Chaplick and Stacho \cite{steveStacho} in terms of vertex-leafage where each subtree in the representation is permitted to have at most $k$ leaves by showing that the vertex-leafage is polynomially solvable for vertex-leafage at most 3 and NP-complete otherwise. In contrast with the former and as another pebble into mosaic of leafage, we show that determining the leafage of a subtree overlap graph is NP-Hard for leafage at least 3.

There has been substantial work on the intersection graphs of subtrees or paths in a tree with parameterisation of the subtrees or the underlying tree.

Jamison and Mulder \cite{jamisonMulder} considered the intersection graphs of subtrees of a tree parameterised by maximum degree of both the underlying tree and the individual subtrees.  They showed that the intersection graphs of subtrees of a tree in which the subtrees and the underlying tree have bounded maximum degree $3$ are exactly chordal graphs, and so can be recognised in linear time.  This contrasts to our work, which shows that recognizing the intersection graphs of paths in an underlying tree of bounded maximum degree $3$ in NP-hard.  

Golumbic and Jamison \cite{golumbicJamison} showed that recognising the edge-intersection graphs of paths in a tree is NP-complete, and showed that on a tree with maximum degree three, the edge-intersection and vertex-intersection graphs are the same classes.  

Golumbic et al \cite{golumbicLipsteynStern} explore the complexity of recognizing the intersection graphs of paths in a  tree parameterised by both the maximum degree of the underlying tree and the number of vertices that must be shared between two paths for them to be considered as intersecting.    They provide a complete hierarchy of graph classes using these parameters. 

The idea of a simpler representation, as well as previous work by \cite{pergelKratochvil} motivated us
to define complicacy for subtree overlap graphs:
For a subtree overlap graph $G$, its {\em complicacy} is {\em minimum $k$, such
that $G$ is a $k$-SOG}. We denote this complicacy by $\hbox{cmp}_S(G)$.
For a natural number $n$, by $\hbox{cmp}_S(n)$ we denote the {minimum $k$
such that every subtree-overlap graph on $n$ vertices is also a $k$-SOG}.
Due to Cenek \cite{eowynThesis}, it holds that $\hbox{cmp}_S(n)\le n$. As a minor result, we obtained
a lower bound $\hbox{cmp}_S(n)\ge n-\log n+o(\log n)$.

%\vbox{\begin{center}
\begin{table}
\centering
{\bf Summary of results related to the recognition of graphs with geometrical representation by possibly restricted subtrees in a~tree}
\begin{tabular}{|l|l|l|l|}
\hline
%Recognition problem/type of representation
~& intersection & overlap & containment\\
\hline
subtrees&P\cite{TarjanYannakakis}&?&P\cite{AGG}\\
\hline
subpaths&P \cite{gavril1978}&?&?\\
\hline
	leafage&P \cite{HS}&{\em NP-C}&NP-C\cite{YAN,FMP,leclerc}\\
\hline
	\makecell{subpaths in a tree with\\ restricted degree}&{\em NP-C}&{\em NP-C}&?\\
\hline

\end{tabular}
\caption{Leafage and degrees must be at least 3 to gain the hardness. The italic shows our results.}
\end{table}

%\end{center}}

The table suggests that the intersection and containment graphs of subtrees in a tree are better explored than the overlap ones. Usefulness of the overlap graphs can be witnessed, e.g., by the fact that the overlap classes use to generalize the other ones (usually, from the intersection or containment representation, we can easily derive a similar overlap one).

			%OK

\section{Preliminaries and Definitions}
All graphs discussed here are simple, undirected, loopless and finite.  If  $G = (V, E)$ is a graph, and $V'\subset V$ a vertex subset, then $G[V']$ denotes the subgraph of $G$ induced by $V'$.  We generally adhere to notation in \cite{golumbicBook}.

To simplify our discussion of relationships between sets, we introduce some basic notation.
  Let $s_i$ and $s_j$ be two sets:
if $s_i \cap s_j = \emptyset$ then $s_i | s_j$, pronounced $s_i$ is \textbf{disjoint} from $s_j$, and 
if $s_i \cup s_j \neq \emptyset$, $s_i \backslash s_j \neq \emptyset$ and $ s_j \backslash s_i \neq \emptyset$ then  $s_i \between s_j$, pronounced $s_j$ \textbf{overlaps} $s_j$.

Let $G = (V, E)$ be a graph represented by subsets $\mathcal{S}$ of set $S$.  Then for convenience, we refer to the set corresponding to vertex $v_i \in V$ as $s_i \in \mathcal{S}$.

Let $t$ be a subtree of tree $T = (V_T, E_T)$.  A \textbf{boundary node} of $t$ is a node of $T$ that is in $t$, and has a neighbour in $T$ that is not in $t$.
 \textbf{SUB($T$)} is the set of all trees that can be derived from $T$ by subdividing the edges of $T$ any number of times.
A \textbf{twig} of tree $T$ is a maximal induced path of $T$
that includes a leaf of $T$ and no node of $T$ that has degree greater than two.
A \textbf{lastbranch} of $T$ is a node $p$ of $T$ of degree at least three such that the forest formed by $T[V_T \backslash p]$ has at most one connected component that  is not a path. Figure~\ref{fig:notion} gives an example of lastbranches and twigs.
Unless otherwise noted, if we say that a graph $G$ \textbf{can be represented on a tree $T$} we mean that there exists  an overlap representation of $G$ with $T$ as the underlying tree. 
\begin{figure}
\centerline{\scalebox{1.5}{\includegraphics{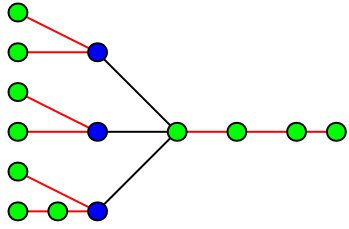}}}
\caption{A coloured example illustrating some of the notions used in our proofs. \textbf{Lastbranches} are the vertices shown in blue - note that the vertex shown in green is not a lastbranch as it is incident to only one twig.  The edges of subgraphs that we call \textbf{twigs} are shown in red. By red edges we denote subgraphs that we reference as twigs. Green branching vertex is not a lastbranch as it is incident only to one twig (instead of three).}
\label{fig:notion}
\end{figure}

Our reductions will be based on blocking off parts of the host tree, thus forcing some vertices to be represented on the twigs of that host tree.  We therefore define a convenient notion of a \textbf{nice} (subtree overlap) representation for an vertex set that requires adjacent members of the vertex set to be represented on different twigs of the host tree.  It is important to note that for this definition of niceness, we will be considering multiple graphs: the overlap graph of our representation, and another graph with a superset of the original's vertices.  The notion of niceness would be meaningless in relation to the overlap graph of our representation, as subtrees corresponding to adjacent vertices must overlap, and so cannot be represented entirely on different twigs of the host tree, this forcing them to be disjoint.  
Let $G = (V, E)$ 
and $G'' = (V'', E'')$ be graphs such that there exists vertex set $V' \subset V$ and $V' \subset V''$ and $G''$ is the overlap graph of  subtrees $\mathcal{T}''$ of tree $T''$.  
Let $\mathcal{T}' \subset \mathcal{T}''$ be the subtrees corresponding to the members of $V'$.
 We say that $V'$ is  \textbf{nicely represented with respect to $G$} if every member of $\mathcal{T}'$ is contained in a twig of $T''$, and there are no two members $v_i, v_j$ of $V'$ such that $(v_i, v_j) \in E$ and $t_i$ and $t_j$ intersect with the same twig of $T''$.  

If a vertex set $V'$ is nicely represented with respect to a graph $G$, then each twig of the nice representation corresponds to an independent subset of $V'$ in $G$, and so:

\begin{observation}\label{obs:niceRepImpliesColouring}
Let $G = (V, E)$ be a graph.  If there exists a representation of some graph on tree $T$ with $k$ twigs in which vertex subset $V' \subset V$ is nicely represented with respect to $G$, then there is a $k$-colouring of $G[V']$.
\end{observation}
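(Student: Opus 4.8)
The plan is to read the colouring straight off the host tree. Write $\mathcal{T} = T''$ and let $W_1, \ldots, W_k$ be an enumeration of its twigs. Since $V'$ is nicely represented with respect to $G$, every subtree $t_i$ with $v_i \in V'$ is contained in some twig of $T''$; so I would fix, for each such $v_i$, an index $c(v_i) \in \{1, \ldots, k\}$ with $t_i \subseteq W_{c(v_i)}$. This yields a map $c \colon V' \to \{1, \ldots, k\}$, and I claim it is a proper colouring of $G[V']$.

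To check properness, take any edge of $G[V']$, that is, vertices $v_i, v_j \in V'$ with $(v_i, v_j) \in E$, and suppose for contradiction that $c(v_i) = c(v_j) = \ell$. Then $t_i \subseteq W_\ell$ and $t_j \subseteq W_\ell$, so both $t_i$ and $t_j$ intersect the twig $W_\ell$. This contradicts the defining requirement of a nice representation, namely that there be no two $E$-adjacent members of $V'$ whose subtrees intersect a common twig of $T''$. Hence $c(v_i) \neq c(v_j)$ for every edge of $G[V']$, so $c$ is a $k$-colouring of $G[V']$, as required.

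I do not expect a genuine obstacle here: once the definition of niceness is unwound the argument is immediate. The only point needing a moment's care is that the colour map $c$ is well defined, which is exactly the first clause of the definition of a nice representation (each $t_i$ lies inside a twig); note also that the freedom for $G''$ to have more vertices than $G$ is what makes the hypothesis non-vacuous, since it allows the subtrees of $E$-adjacent members of $V'$ to be pairwise disjoint in $G''$ while still being edges of $G$.
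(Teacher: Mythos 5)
Your proposal is correct and matches the paper's (implicit) argument: the paper justifies this observation with exactly the remark that each twig of a nice representation hosts an independent subset of $V'$, so colouring each vertex by the twig containing its subtree gives a proper $k$-colouring. Your only addition is the explicit check of well-definedness and properness, which is a faithful unwinding of the same idea.
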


When reasoning about nice representations, we will use the idea of an \emph{interpath}: if $I$ and $J$ are two intervals on line $L$, then the {\em interpath} of $I$ and $J$ is a subset of line $L$ that contains $I$, $J$ and the portion of $L$ between them.

			%OK (read)
The main contribution of this work is to show several hardness results. To do so, we use $k$-colouring of a 3-connected graph as the source problem in our reductions.  Although this problem is very likely already known to be NP-complete, the authors have been unable to find a published proof of this, so we include a sketched proof for completeness.

\begin{problem}
The problem 3-CON-k-COL($G$) is the decision problem for a fixed integer $k \geq 3$: for a 3-connected graph $G$, is there a vertex colouring of $G$ using  $k$ colours? 
\end{problem}
\begin{theorem}
3-CON-k-COL is NP-Complete for fixed integer $k \geq 3$.
\label{perm:theorem-on-3-colouring}
\end{theorem}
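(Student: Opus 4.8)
The plan is to establish NP-completeness of 3-CON-$k$-COL for each fixed $k \geq 3$ by a reduction from ordinary $k$-colouring of an arbitrary graph, which is well known to be NP-complete for $k \geq 3$. Membership in NP is immediate: a colouring is a polynomial-size certificate checkable in polynomial time, and 3-connectivity of the input can be verified in polynomial time (e.g. via max-flow between a fixed vertex and every other vertex), so this needs only a sentence.

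For the hardness direction, first I would reduce from $k$-colouring on graphs of minimum degree at least $k-1$ (or simply on arbitrary graphs — low-degree vertices are harmless since they can always be coloured last). Given an arbitrary graph $G$ on $n$ vertices, I would construct a $3$-connected graph $G'$ such that $G'$ is $k$-colourable if and only if $G$ is. The natural gadget is the join $G' = G + K_{k-3}$ (add $k-3$ new universal vertices, each adjacent to everything including each other), possibly followed by a step that boosts connectivity. Adding a clique of $k-3$ universal vertices forces those vertices to take $k-3$ distinct colours, none of which may appear in $G$, so $G'$ is $k$-colourable iff $G$ is $(k - (k-3)) = 3$-colourable — wait, that changes the target number of colours, so instead I would reduce from $3$-colouring specifically and use the join with $K_{k-3}$ to lift it to a $k$-colouring instance; $3$-colouring is NP-complete, which is exactly the base case $k = 3$ that must be handled separately (there the join is empty and we need a different connectivity-boosting step). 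The join with a nonempty complete graph is automatically $3$-connected (in fact $(k-2)$-connected) whenever $k \geq 5$, so the only genuinely delicate cases are $k = 3$ and $k = 4$.

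The main obstacle is therefore simultaneously guaranteeing $3$-connectivity and preserving the chromatic constraint in the small cases. For $k = 3$: start from an instance $G$ of $3$-colouring; I would first make $G$ connected and of minimum degree $\geq 2$ by standard padding, then repeatedly "patch" each $2$-cut (or low-connectivity obstruction) by identifying $G$ with a suitable $3$-regular $3$-connected $3$-colourable scaffold — for instance, take several disjoint copies of $G$ and wire them together through a $3$-connected bipartite-like frame whose own colouring is unconstrained, or use the classical trick of attaching a triangle through three independent paths. An alternative cleaner route for $k=3$ is the join $G' = G + \overline{K_3}$: adding three mutually non-adjacent vertices, each adjacent to all of $V(G)$, makes $G'$ $3$-connected (any two vertices of $G$ are joined through two of the three new vertices, and any two new vertices through a vertex of $G$, assuming $|V(G)| \geq 2$), while $G'$ is $3$-colourable iff $G$ is, since the three new independent vertices may all receive the same colour, say colour $1$, forcing $G$ to be properly coloured with colours $\{2,3\}$ — again this is wrong, as it forces $2$-colouring. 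The correct move is to note that three independent universal vertices need not share a colour; $G'$ is $3$-colourable iff there is a proper $3$-colouring of $G$ together with a colour for each new vertex avoiding every colour used in $G$, which is possible iff $G$ uses at most $2$ colours. So the clean join does not work, and one really does need the scaffolding construction; for $k=4$ the join $G + K_1$ works and gives $3$-connectivity once $G$ itself is connected with no cut vertex, reducing again to a preprocessing step on $G$.

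Concretely, for all $k \geq 3$ I would reduce from $k$-colouring of graphs that are already connected and bridgeless (easily arranged), and then for each low-order separator of size $\leq 2$ introduce a constant-size $k$-colourable rigid gadget (a $k$-clique minus a perfect matching, or a wheel $W_{k}$) attached at the separator so as to create three internally disjoint paths across it without creating any new colouring constraint that $G$ could not already satisfy. Iterating over all such separators — of which there are polynomially many and whose elimination can be done in a single global pass using a $3$-connected host graph into which $G$ is amalgamated — yields $G'$ in polynomial time with $G'$ $3$-connected and $\chi(G') = \max(\chi(G), k')$ for an appropriate constant $k' \le k$ forced by the gadgets; choosing the gadgets so that $k' \le k$ and so that they are themselves $k$-colourable makes $G'$ $k$-colourable iff $G$ is. I expect writing down one uniform gadget that works verbatim for every fixed $k \geq 3$ to be the fiddly part, which is presumably why the authors only sketch it.
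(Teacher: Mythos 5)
There is a genuine gap: you never actually produce a working construction for the case $k=3$ (and $k=4$ is only gestured at via an unspecified ``preprocessing step''). You correctly diagnose that the join constructions fail for small $k$ -- joining with $K_{k-3}$ changes the number of colours, and joining with $\overline{K_3}$ wrecks the chromatic equivalence -- but what you offer in their place is a sketch: patch every separator of size at most two with ``a constant-size $k$-colourable rigid gadget'' creating three internally disjoint paths ``without creating any new colouring constraint''. This is exactly the step that needs a proof, and as stated it does not obviously go through: attaching a gadget at one $2$-cut can create new small cuts or new colouring interactions, the claim that all separators can be eliminated ``in a single global pass using a $3$-connected host graph into which $G$ is amalgamated'' is not a construction, and for $k=3$ a gadget that boosts connectivity while preserving $3$-colourability exactly is the entire difficulty -- you acknowledge this yourself by calling the uniform gadget ``the fiddly part'' and leaving it unwritten. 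Since $k=3$ is the heart of the theorem, the hardness direction is not established.

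For comparison, the paper's reduction is short and uniform in $k$: from a connected instance $G$ of ordinary $k$-colouring, take three disjoint copies $G_1,G_2,G_3$ and add, for every vertex $v$ of $G$, the triangle on its three copies $v_1,v_2,v_3$. The triangles immediately give $3$-connectivity (any two vertices are linked by three internally disjoint paths routed through the three copies), and $k$-colourability is preserved in both directions: $G$ sits in $G'$ as an induced subgraph, and conversely a $k$-colouring $c$ of $G$ extends to $G'$ by colouring $v_i$ with $c(v)+i \bmod k$, the cyclic shift guaranteeing the three copies of each vertex get distinct colours precisely because $k\ge 3$. This single trick replaces all of your case analysis on $k$ and the separator-patching machinery; if you want to salvage your approach, the cleanest fix is to abandon the joins and adopt a copy-and-shift construction of this type.
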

\begin{proof}

Firstly, the problem is in NP, as it is a decision problem and a colouring would serve as a certificate.

Now, we supply a reduction from $k-$colouring a connected graph.  Let $G = (V, E)$ be a connected graph.  Let $G_1=(V_1, E_1), G_2=(V_2, E_2)$ and $G_3=(V_3, E_3)$ be three disjoint copies of $G$.  If $v$ is a vertex in $V$, let $v_i$ be its copy in $G_i$, for $i \in \{1, 2, 3\}$. 
We now connect these three copies by adding edges between $v_1, v_2, v_3$ for every vertex $v \in V$: informally, we have joined up all copies of each original vertex.  Let $G' = (V', E')$ be $G_1, G_2, G_3$ plus these connecting edges.  
For any vertex pair $u_i, v_j$ where $u \neq i$ there are three disjoint paths corresponding to a single $uv$ path in $G$: each passing through a different copy of $G$. To break all of them, we have to remove at least 3 vertices.  For any vertex pair $v_i, v_j$ there are there are two paths directly through the triangle formed by $v_1, v_2, v_3$, and at least one additional path through the copies in $G_i, G_j$ of any neighbour of $v$ in $G$.   Therefore,  $G'$ is 3-connected and has minimum degree 3.  An example of this construction and the disjoint paths mentioned here can be found in Figure~\ref{fig:3con}.
\begin{figure}
\centerline{\scalebox{1.5}{\includegraphics{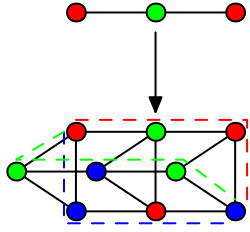}}}
\caption{Idea how to reduce (normal) $k$-colouring to 3-CON-$k$-COL. We reduce $P_2$. We also show, how to find three disjoint paths for one pair of vertices (dashed). Other cases are also very simple. Colours of the vertices correspond to the assigned colours (in problem of colouring).}
\label{fig:3con}
\end{figure}

It remains to 
show that
$G'$ is $k$-colourable for $k \geq 3$ if and only if $G$ is $k-$colourable.

First, assume that $G'$ is $k$-colourable.  Since $G$ is isomorphic to an induced subgraph of $G'$, then $G$ is $k$-colourable.  
Next, assume that $G$ is $k$-colourable with colours $0, 1,\ldots k-1$. We make a colouring of $G'$ in such a way that if a vertex $v$ has a colour $c(v)$, vertex $v_i$ is assigned a colour $c(v_i)=c(v)+i \mod k$, i.e., we circularly shift the colours for each of $V_1, V_2, V_3$. As the number of colours is at least 3, we do not assign the same colour to any pair of copies of the same vertex.
\end{proof}

Having described our source problem, we now define the problems that are the focus of this work, and that we will show are NP-complete:

\begin{problem}
The problem REC-PMD-$k(G)$ is the decision problem for a fixed integer $k \geq 3$:  does there exist a family of paths $\mathcal{T}$ of tree $T$ with maximum degree $k$ such that $G$ is the overlap graph of $\mathcal{T}$?
\end{problem}
% RECOG-OG-SUB-T(G)
% problem
\begin{problem}The problem REC-SUB-T($G$) is the decision problem for a fixed tree $T$ with at least three leaves: does there exist a tree $T' \in SUB(T)$ and a family $\mathcal{S}$ of subtrees of $T'$ such that $G$ is the overlap graph of $\mathcal{S}$?
\end{problem}
% RECOG-OG-T_k(G)
% problem
\begin{problem}
The problem REC-T-$k(G)$ is the decision problem for a fixed integer $k \geq 3$:  does there exist a tree $T$ with $k$ leaves and a family $\mathcal{S}$ of subtrees of $T$ such that $G$ is the overlap graph of $\mathcal{S}$?
\end{problem}

We can conclude that these problems are in NP just by reference:
% Proof it's in NP
Cenek \cite{eowynThesis} showed that every minimal subree overlap representation of a graph $G$  is of size polynomial in the size of $G$, and could be checked for correctness in polynomial time.  This would serve as a certificate, so we can conclude that  REC-SUB-T($G$), REC-T-$k(G)$, and REC-PMD-$k(G)$  are in NP.
% 3-connected colouring

Our main results are reductions from 3-CON-k-COL($G$) to REC-SUB-T($G$), REC-T-$k(G)$, and REC-PMD-$k(G)$.

All of our reductions work on the same intuition.  We reduce from an instance of graph colouring    
to an instance of a representability problem.

We start with an instance of 3-CON-k-COL($G$), and transform this graph into another                                                                                                          
graph $G''$, which will be our instance of the representability problem.

In the transformation from $G$ to $G''$, the vertices and edges of $G$ are used as vertices of $G''$.  
Other gadgets are added to $G''$ in such a way that
  for REC-SUB-T($G$) and REC-T-$k(G)$,                                                                                               
 twigs in the host tree  of the representation correspond to  colours used for a $k$-colouring                                                                                               
of graph $G$.  That is, no two vertices corresponding to adjacent vertices in $G$ are represented on the same twig of the representation of $G''$.

Similarly for REC-PMD-$k$,  we start with an instance of 3-CON-k-COL($G$), and transform it to another graph $G''$ that will be our instance of REC-PMD-$k$.  $G''$ includes the vertices and edges of $G$ in its vertex set.  
Given a representation of $G''$, we use subtrees of the host tree that would form connected components of the forest that would result from removal of a 
fixed vertex $v$ of degree $k$ 
to correspond to colours used for a $k$-colouring of graph $G$.
 No two gadgets corresponding to adjacent vertices in $G$ are represented on the same branch of the representation of $G''$. That is, their representing subtrees cannot contain $v$.  We do this by forcing all vertices of $G''$ that are edges of $G$ to contain $v$, and by using another class of gadget vertices.  
 
 The overarching idea is that for each of the reductions to the problems we are considering we associate sections of the host tree with one of $k$ colours, and force all vertices of $G''$ that are also vertices of $G$ to be represented in one of these sections.  
To deal with a technical complication, the vertex set of $G''$ includes several copies of the vertex set of $G$ as we show that a constant number                                                                                             
of vertices may be represented in a bad way.  By adding more                                                                                             
copies than the number of possible bad exceptions, at least one copy                                                                                           
must fulfill our requirements, allowing us to derive a colouring of $G$ from a representation of $G''$.

The rest of this manuscript is organised as follows: in Section 2 we define a gadget that is used in all three reductions, and prove a number of utility lemmas on representations of that gadget.  In Sections 3 and 4 we describe each of the three reductions in turn, using a common overall strategy.  Finally, in Section 5 we make some concluding statements and suggestions for future areas of research.

	%OK (read)
\section{$G_d^u$-blocking gadget and the construction of $G''$}

The purpose of the $G_d^u$ gadget (depicted in Figure~\ref{fig:GDU}) is to occupy parts of the tree in a representation, forcing vertices in the copies of the vertex set of the original graph to be represented on twigs.  We can then map these twigs to colours, generating a colouring for our original graph. In this section we will describe $G_d^u$ and its use in producing  $G''$ for a given graph $G$. This graph $G''$ will be used in all our reductions.  Because we are using colouring of 3-connected graphs as the source problem in our reductions, we may restrict our attention to 3-connected $G$.

Let $d \geq 3$ and $u \neq 1$ be two positive natural numbers. We use these as parameters in building a $G_d^u$ graph: they indicate the number of paths between different named vertices in the graph. $G_d^u$ is consists of four named vertices $v_s, v_b, v_{s'}$ and $v_{b'}$, joined by specified numbers of paths:  vertices $v_s$ and $v_b$ are connected by $d$ disjoint paths of length 3 (with the middle vertex of the first of these paths being $v'_b$), and vertices $v_{s'}$ and $v_{b'}$ are joined by $u$ disjoint paths of length 3. 

\begin{figure}
\scalebox{0.5}{\includegraphics{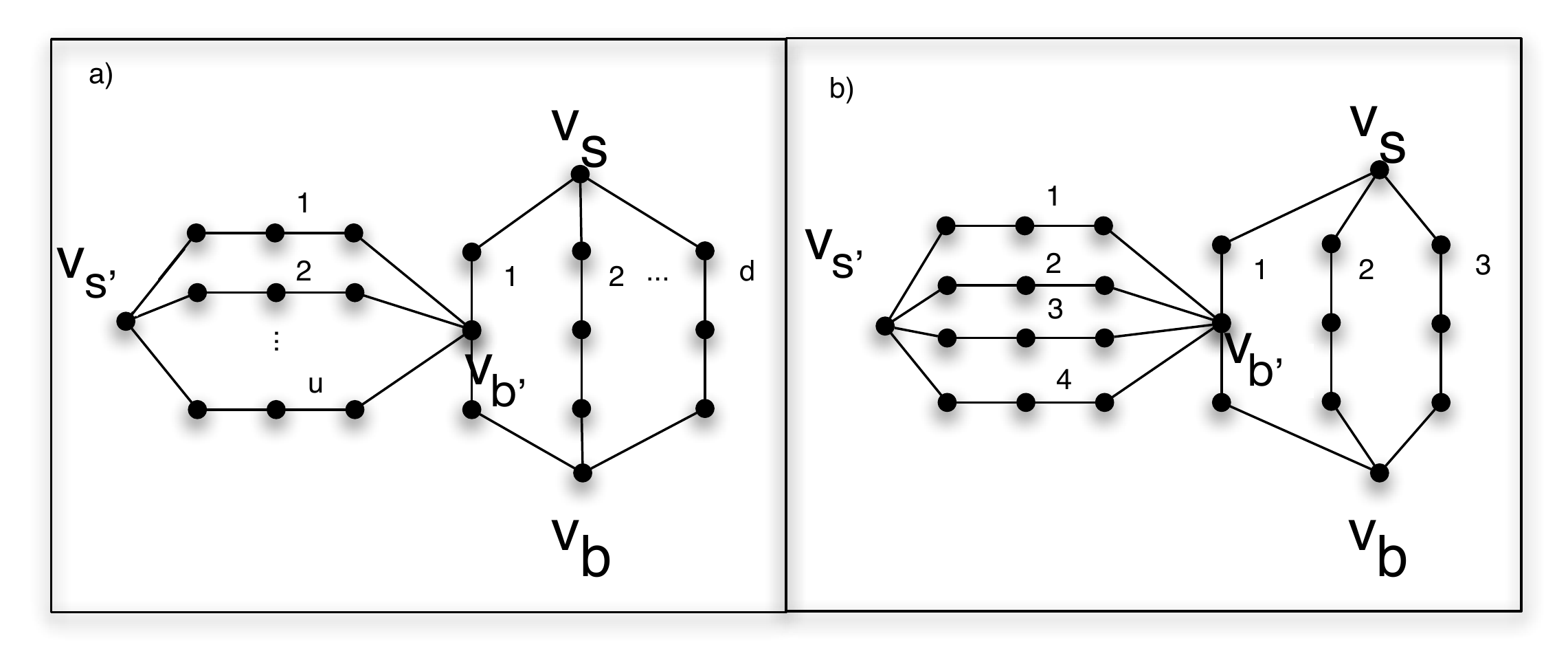}}
\caption{In a) the $G_d^u$ graph - note the presence of $d$ paths of three vertices between vertices $v_s$ and $v_b$, and $u$ paths of three vertices between $v_{b'}$ and $v_{s'}$.  In b) an example: the $G_3^4$ graph.}
\label{fig:GDU}
\end{figure}

Let $G = (V, E)$ be a 3-connected graph. 
 Let $G'$ be the disjoint union of six copies of $G$.  For later convenience, we refer to the vertices of the six isomorphic connected components of $G' = (V', E')$ as $V_a, V_b, V_c, V_d, V_e, V_f$, or collectively as $V_a ... V_f$.

We  describe the production of a graph $G''$ from $G'$.
We define four vertex sets:
\begin{my_enumerate}
\item $V_1 = V'$
\item $V_2 = E'$
\item $V_3 = $ a set of size $|V'|$ disjoint from $V_1 \cup V_2 \cup V_4$
\item $V_4 =$ the vertices of $G^u_d$ (Figure \ref{fig:GDU})
\end{my_enumerate}
Then let $V'' = V_1 \cup V_2 \cup V_3 \cup V_4$, and $f$ be a bijection between $V_1$ and $V_3$.
Each of $V_1, V_2, V_3$ will serve a particular function in our new graph, and we give them informal names that reflect these functions: we call $V_1$ the \emph{vertex-representatives}, $V_2$ the \emph{edge-representatives}, and $V_3$ are the \emph{brothers} of vertices in $V_1$ (and in particular, each vertex in $V_3$ is the brother of the member of $V_1$ it is related to by $f$.

Then let $E'' = E_1 \cup E_2 \cup E_3 \cup E_4 \cup E_5 \cup E_6$ such that :
\begin{my_enumerate}
\item $E_1  = \{(v_i, v_j) | v_j \in E'$ and there exists $v_k$ such that
              $v_j = (v_i, v_k)\}$
\item $E_2 = \{(v_i, v_j) | v_i, v_j \in V_3 \cup V_2 \}$
\item $E_3 = \{(v_i, v_j) | v_j = f(v_i)\}$
\item $E_4 = \{(v_i, v_s) | v_i \in (V_3 \cup V_2)\}$
\item $E_5 = \{(v_i, v_b) | v_i \in (V_3 \cup V_2)\}$
\item $E_6 = $ the edges of $G_d^u$
\end{my_enumerate}
Let $G'' = (V'', E'')$.  Due to symmetry in the graph, we can label $v_s, v_b, v_{s'}$ and $v_{b'}$ in $V_4$ as in Figure \ref{fig:GDU}.
We call $G''$ the \emph{$G_d^u$-blocked graph} of $G$, and $G_d^u$ the \emph{blocker} of $G''$.

The main idea of our reductions is to block off parts of the underlying host tree, forcing representation of vertices corresponding to vertices in the original graph onto a limited number of twigs.  In the case of our reductions on sub\textbf{tree} overlap graphs (as opposed to the reductions on sub\textbf{path} overlap graphs), this is achieved by the use of the $G_d^u$ blocker, which will occupy all nodes in the host tree of degree greater than three (that is, branching nodes).   The brothers of the vertex-representatives will force those vertex-representatives to be disjoint in the representation.  The edge-representatives will be un-representable if the original graph is not colourable.  These three sets of vertices in combination will force the vertex-representatives to be placed only on the twigs of the host tree.  

Despite our careful construction, it will be possible for a strictly limited number of vertex-representative vertices to be placed in a  way that does not correspond properly to a colouring in the original graph (we will later describe these in detail as \textit{illegal pairs}).  To overcome this difficulty, we include more disjoint copies of the original graph in our reduction than can exist illegal pairs, guaranteeing that if the overall graph is represented, then at least one set of vertex-representatives is represented in a way that gives us a colouring of the original graph.  

The main idea in the case of our reductions on sub\textbf{path} overlap graphs is similar to that for the sub\textbf{tree} overlap graphs, but because of the more restricted representation, we do not need to include $G_d^u$ in the instance we produce: a graph with only the vertex-representatives, edge-representatives, and brothers will suffice.  

% todo: I don't think this figure is quite informative enough: perhaps a figure halfway between the old and new one?
 Figure~\ref{fig:gdup2} demonstrates how to imagine graph $G''$ and sets $V_1, V_2$ and $V_3$, i.e., vertex-representatives, edge-representatives and brothers of vertex-representatives, respectively. They are depicted using individual colours. Vertex-representatives are yellow, edge-representatives blue and brothers peach. Note that peach and blue vertices are forming a clique. This graph we assign to $P_2$.
\begin{figure}
\centerline{\scalebox{0.7}{\includegraphics{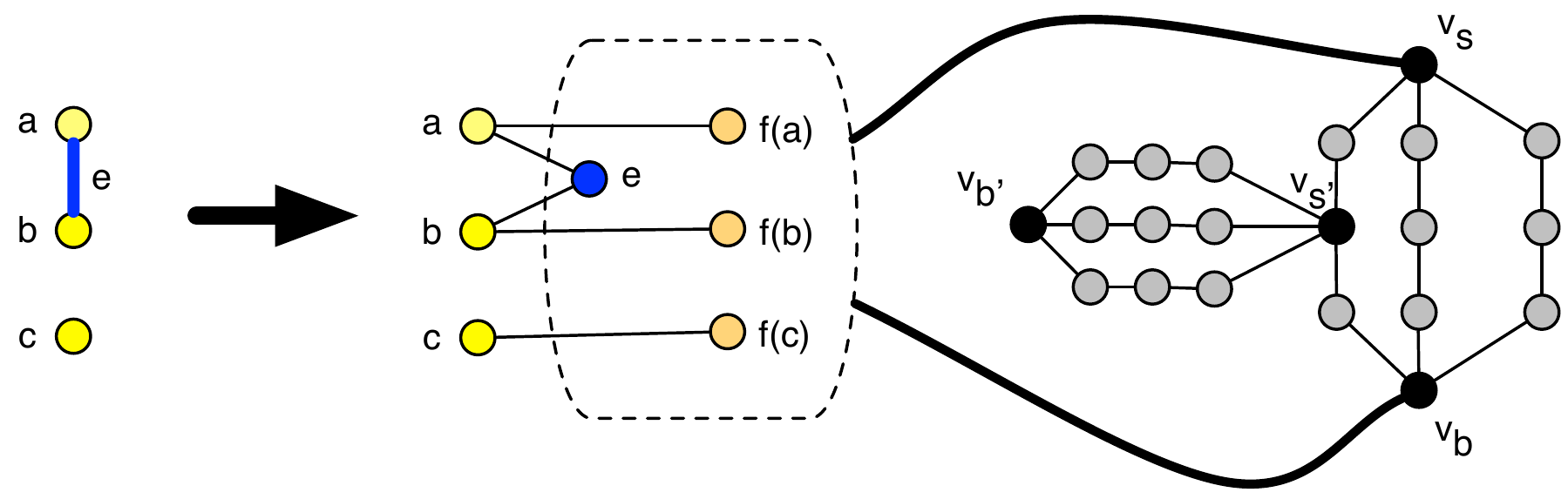}}}
\caption{Demonstration of constructing $G''$ (on the right side of the arrow) from $G$ (on the left side of the arrow). Yellow vertices ($a, b, c$)  are vertex-representatives, peach vertices ($f(a), f(b), f(c)$) are their brothers and blue vertices are edge-representatives. Note that all vertices in the dotted oval induce a clique, and are all adjacent to $v_b$ and $v_s$.  In a full construction, there will be six copies of $G$ involved, and therefore six copies of the vertex-representatives, brothers of vertices, and edge-representative.  All edge-representative and brothers will induce a clique.  These copies have been omitted for legibility.  }
\label{fig:gdup2}
\end{figure}

Because of the similar ideas behind all the reductions, we proceed to describe some of the details first, before launching into the reductions themselves.  First, we describe the special cases of illegal representations of vertex-representative, and show that in any representation of a $G_d^u$-blocked graph, there are a limited number of these special cases.   This illegality exists with respect to a particular representation of the $G_d^u$-blocked graph, and therefore the definition includes both graph and representation:

\begin{definition}
Let $G'' = (V'', E'')$ be the $G_d^u$-blocked graph of $G = (V, E)$, represented on host tree $T$, with vertex classes $V_1, V_2, V_3, V_4$ as defined above, and let $v_i, v_j \in V_1$ such that $(v_i, v_j) \in E$.  Vertices $v_i, v_j$ are an \textbf{illegal pair} if  $t_i, t_j$ intersect the same twig of $T$.  
\end{definition}

When discussing illegal pairs, we will also sometimes refer to the subtrees or subpaths representing illegal pairs of vertices as illegal themselves.

\begin{figure}
  \begin{center}
%    \begin{minipage}{0.3\linewidth}
\scalebox{0.35}{\includegraphics{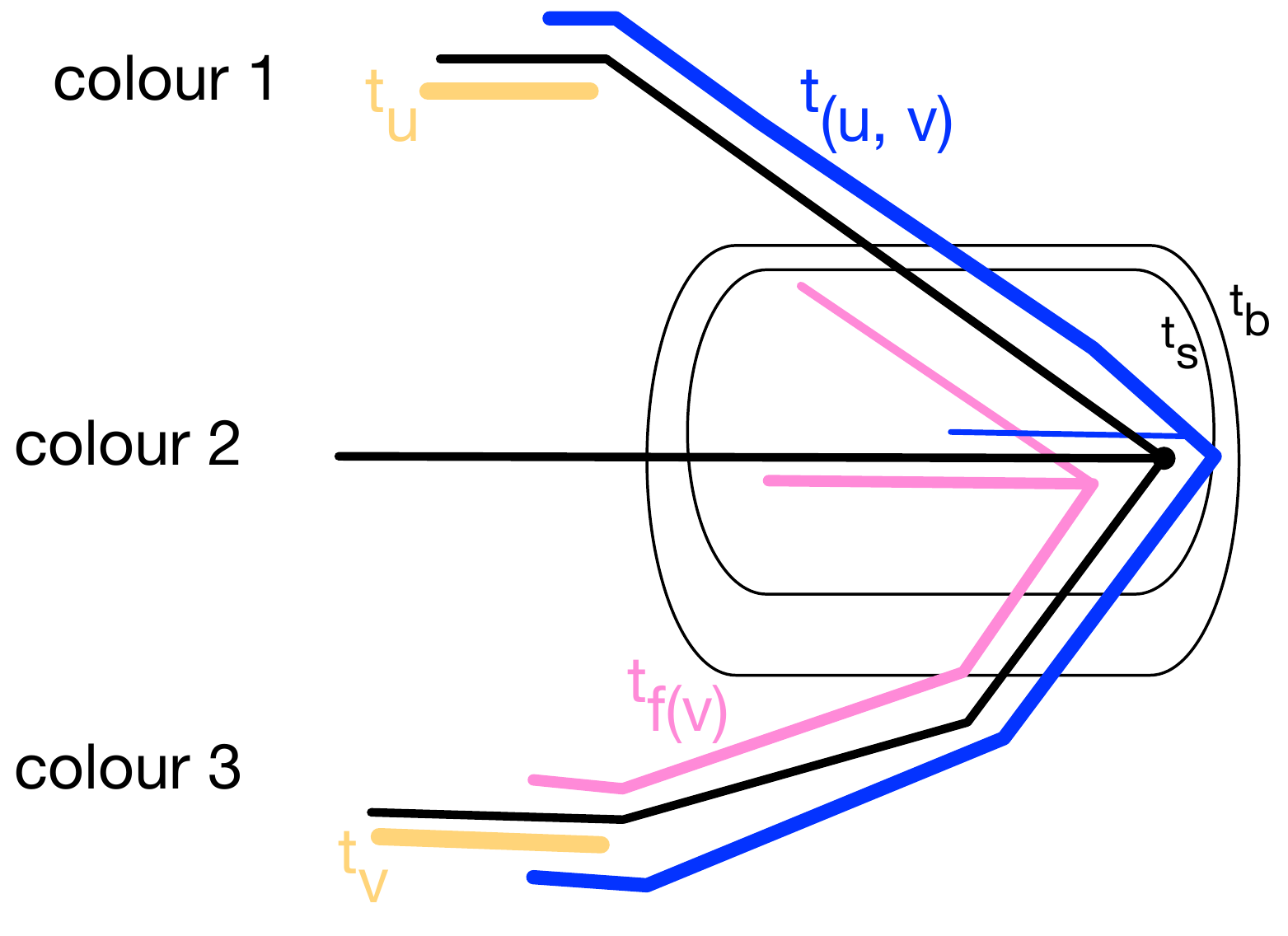}}
%    \end{minipage}\hfill
%    \begin{minipage}{0.6\linewidth}
      \caption{A quick intuition on the representation of vertices in $V_1, V_2, V_3$ in a $G_d^u$-blocked graph on a tree with a single vertex of degree greater than two.  Representing the $G_d^u$-blocked graph of a 3-colourable graph $G$:  vertex $v$ is a copy in the $G_d^u$-blocked graph of a vertex that is coloured with colour 1 in $G$, and vertex $u$ is a copy in the $G_d^u$-blocked graph of a vertex that is coloured with colour 3 in $G$.  Vertices $u$ and $v$ are adjacent in $G$.  Recall that there are six copies of the vertex set of $G$ in the vertex set 
of the $G_d^u$-blocked graph of $G$ - we shall not consider these explicitly here.  If the underlying tree is shown in black, then we assign each twig to a colour.  We represent $t_u$ on the twig for colour 1, and $t_v$ on the twig for colour 3.  Shown in blue, $t_{(u, v)}$ is the subtree for the edge-representatives that represents the edge between $u$ and $v$ in $G$: it has an endpoint in each of $t_u$ and $t_v$.  We show in pink the subtree $t_{f(v)}$, the subtree representing the brother of $v$.  Blocking subtrees $t_s$ and $t_b$ contain the central vertex, and are contain all of the tree within their sketched outline ovals.  The force all edge-representatives and brothers to overlap, short paths of appropriate length are added to them, branching from the central vertex (but still within $t_s$).  
} \label{fig:exRep}
%    \end{minipage}
  \end{center}
\end{figure}

We will make extensive use of several technical properties of representations of $G_d^u$ (and therefore $G_d^u$-blocked graphs): we concern ourselves with these properties for the remainder of this section.

Rosgen \cite{rosgenThesis} provides us with the following useful Lemma:
\begin{lemma}\label{lem:Rosgen3.14}
Let graph $G = (V, E)$ be represented by subtrees $\mathcal{T}$ of tree $T$.  Let $v_i, v_j$ be non-adjacent vertices in $V$.  If $t_i \subset t_j$ then for every vertex $v_k$ such that there exists a path of $G$ from $v_i$ to $v_k$ that does not intersect the neighborhood of $v_j$, it holds that $t_k \subset t_j$.
\end{lemma}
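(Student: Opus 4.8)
The plan is to argue by induction on the length of a path from $v_i$ to $v_k$ in $G$ that avoids the (closed) neighbourhood of $v_j$, establishing the stronger local statement: if $v_p$ and $v_q$ are adjacent in $G$, neither of them is adjacent to $v_j$, and $t_p \subset t_j$, then $t_q \subset t_j$ as well. The lemma then follows immediately by chaining this statement along the given path from $v_i$ to $v_k$, using the hypothesis $t_i \subset t_j$ as the base case (the path has no vertex in the neighbourhood of $v_j$, so in particular $v_i$ itself is non-adjacent to $v_j$, matching the setup, and every intermediate vertex qualifies).

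For the inductive step I would reason about the interaction of three subtrees $t_p$, $t_q$, $t_j$ of $T$. We know $t_p \subset t_j$, and $v_p \sim v_q$ means $t_p \between t_q$ (they overlap, in the overlap-graph sense: they intersect but neither contains the other). Since $v_q \not\sim v_j$, the subtrees $t_q$ and $t_j$ satisfy either $t_q | t_j$ or $t_q \subset t_j$ or $t_j \subset t_q$. The case $t_q | t_j$ is impossible: $t_p \subset t_j$ would then force $t_p | t_q$, contradicting $t_p \between t_q$. The case $t_j \subset t_q$ is also impossible: together with $t_p \subset t_j$ it gives $t_p \subset t_q$, again contradicting $t_p \between t_q$ (overlap rules out containment in either direction). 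The only remaining possibility is $t_q \subset t_j$, which is what we want; and since $v_q \not\sim v_j$ and $v_q \sim$ the next vertex on the path (which is also non-adjacent to $v_j$), the inductive hypothesis applies to continue.

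The main subtlety — the step I expect to require the most care — is making sure the three-subtree trichotomy is genuinely exhaustive and that overlap really does preclude both disjointness-after-nesting and the reverse nesting. Concretely, the argument hinges on the transitivity-type facts ``$A \subset B$ and $B | C$ $\Rightarrow$ $A | C$'' and ``$A \subset B$ and $B \subset C$ $\Rightarrow$ $A \subset C$'' for subtrees of a tree; these are elementary but should be invoked explicitly, since without them the exclusion of the two bad cases is not automatic. One should also note that ``$t_p \between t_q$'' is exactly the negation of ``$t_p | t_q$ or $t_p \subset t_q$ or $t_q \subset t_p$'', so the overlap hypothesis is precisely strong enough to kill the two unwanted branches and no more. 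No properties of trees beyond these set-containment manipulations are needed, so once the trichotomy bookkeeping is set up the induction goes through cleanly.
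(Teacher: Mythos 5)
Your proof is correct. The paper does not actually prove this lemma --- it is imported verbatim from Rosgen's thesis --- so there is no in-paper argument to compare against; your induction along the path, with the trichotomy for the non-adjacent pair $t_q, t_j$ and the two exclusions via ``$A \subset B$ and $B \mid C$ imply $A \mid C$'' and transitivity of containment, is exactly the standard argument, and since it uses only the set-theoretic meaning of overlap it in fact establishes the statement for overlap representations by arbitrary sets, with the tree structure playing no role. One cosmetic point: state the non-adjacency trichotomy with $\subseteq$ rather than $\subset$ (the paper's definition of overlap allows equality as a degenerate containment); your two exclusion arguments dispose of the cases $t_q \mid t_j$ and $t_j \subseteq t_q$ unchanged, so the desired conclusion $t_q \subset t_j$ still follows.
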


By applying this lemma, Enright \cite{Jess} showed that in any representation of $G_d^0$ where $d$ is at least 3, either $t_s \subset t_b$ or $t_b \subset t_s$: because $G_d^0$ is an induced subgraph of every $G_d^u$, this also holds in all representations of $G_d^u$.  Because of the symmetry in our construction, we can assume, without loss of generality that $t_s \subset t_b$: note also that the subscripts $s$ and $b$ are mnemonics for \textit{small} and \textit{big}, respectively. 

By further applying Rosgen's lemma on our $G_d^u$, we obtain:

\begin{lemma}\label{l:tstbcontained}
Without loss of generality, in any subtree-overlap representation of $G_d^0$ with $d\ge 3$, $t_s\subset t_b$. In any subtree-overlap representation of $G_d^u$ with $d\ge 3$ and $u\ge 2$, $t_s\subset t_b$ and $t_{s'}\subset t_{b'}$.
\end{lemma}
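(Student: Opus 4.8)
The first claim is exactly the statement attributed to Enright above: applying Lemma~\ref{lem:Rosgen3.14} to the two named vertices $v_s,v_b$ of $G_d^0$ (which are non-adjacent, while each of the $d$ internal path-vertices is adjacent to both) forces one of $t_s\subset t_b$, $t_b\subset t_s$, so up to the symmetry of $G_d^0$ we take $t_s\subset t_b$. So the real work is the second sentence: showing that once $u\ge 2$, we also get $t_{s'}\subset t_{b'}$ in any representation of $G_d^u$. Since $G_d^0$ is an induced subgraph of $G_d^u$, the inclusion $t_s\subset t_b$ is inherited, and by the same argument applied to the $v_{s'},v_{b'}$ side (which is a $G_u^0$ up to relabelling, with $u\ge 2$ — wait, the lemma requires the parameter $\ge 3$, so this needs care) we would like $t_{s'}$ and $t_{b'}$ to be comparable as well.

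The plan is therefore: first observe $\{v_{s'},v_{b'}\}$ together with the $u$ length-$3$ paths joining them forms an induced subgraph, and $v_{s'},v_{b'}$ are non-adjacent while each path-midpoint sees both. With $u\ge 2$ I would argue that $t_{s'}$ and $t_{b'}$ must be comparable (one contained in the other) directly: if they merely overlapped or were disjoint, then consider the two internal paths; each midpoint's subtree meets both $t_{s'}$ and $t_{b'}$ but is disjoint from the other midpoint's subtree and from everything else, and one checks this is impossible on a tree — this is the standard ``two disjoint subtrees cannot both be glued to two non-comparable subtrees by two more disjoint subtrees'' obstruction, essentially a Helly-type argument for subtrees of a tree. (If the ambient lemma genuinely needs the parameter to be at least $3$, I would instead route through the fact that $G_d^u$ with $d\ge3$ already pins down a lot of the host tree's structure near $t_s,t_b$, and re-run Rosgen's lemma with $v_{b'}$ playing the role of the ``big'' vertex: since $v_{s'}$ is non-adjacent to $v_{b'}$ and there is a path from $v_{s'}$ avoiding $N(v_{b'})$ into the $v_s,v_b$ part only if the $u$ paths are routed carelessly — so one shows any such path does hit $N(v_{b'})$, blocking the bad inclusion and leaving only $t_{s'}\subset t_{b'}$ after using symmetry.)

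Having fixed that $t_{s'}$ and $t_{b'}$ are comparable, the ``without loss of generality'' for the primed pair then needs the same symmetry remark as for the unprimed pair: the graph $G_d^u$ has an automorphism swapping $v_{s'}\leftrightarrow v_{b'}$ (relabelling the internal vertices of the $u$ paths), so we may choose the labelling with $t_{s'}\subset t_{b'}$. The one subtlety is that the two ``without loss of generality'' choices — for the $s,b$ pair and for the $s',b'$ pair — must be made independently, which is fine because the two automorphisms act on disjoint parts of the vertex set and commute.

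The main obstacle I expect is exactly the parameter mismatch: Lemma~\ref{lem:Rosgen3.14} as quoted is a general statement, but the cited consequence (comparability of the two named vertices' subtrees) was derived for $G_d^0$ with $d\ge 3$, and here the primed side has only $u\ge 2$ paths. So the crux is to give a self-contained argument that $u=2$ already suffices for comparability of $t_{s'},t_{b'}$ — i.e. that two internally-disjoint length-$3$ paths between two vertices force their endpoint-subtrees to nest — rather than quoting the $d\ge 3$ result verbatim. Everything else (inheriting $t_s\subset t_b$ from the induced $G_d^0$, and the symmetry/WLOG bookkeeping) is routine.
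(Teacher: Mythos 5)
Your handling of the first sentence matches the paper's (the induced $G_d^0$, Rosgen's lemma, and the genuine $v_s\leftrightarrow v_b$ symmetry), but the second half of your plan has a real gap: there is \emph{no} automorphism of $G_d^u$ swapping $v_{s'}$ and $v_{b'}$. By construction $v_{b'}$ is the middle vertex of the first of the $d$ paths joining $v_s$ and $v_b$, so it is adjacent to both $v_s$ and $v_b$ and has degree $u+2$, while $v_{s'}$ has degree $u$ and is adjacent only to the $u$ primed midpoints. Consequently ``$t_{s'}\subset t_{b'}$'' is not a labelling choice one can make without loss of generality; deciding which of the two nested subtrees is the inner one is precisely the substantive content of the second sentence, and it can only be settled using the rest of $G_d^u$. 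The paper does this by assuming $t_{b'}\subset t_{s'}$ and applying Lemma~\ref{lem:Rosgen3.14} with $v_i=v_{b'}$, $v_j=v_{s'}$: every vertex of the unprimed part is reachable from $v_{b'}$ by a path avoiding $N(v_{s'})$ (the primed midpoints), so $t_s$, $t_b$ and the unprimed paths are forced inside $t_{s'}$, from which a contradiction is extracted. Your proposal contains no step that could distinguish the two orientations, so it cannot reach the stated conclusion.

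The self-contained comparability argument you offer for $u=2$ is also incorrect. You rightly sensed the parameter tension (the paper itself glosses over it by declaring the primed part ``isomorphic to $G_3^0$''), but the obstruction you invoke does not exist: the two midpoint subtrees of the primed paths are non-adjacent in the overlap graph, so they need only be disjoint \emph{or nested}, not disjoint. If $t_{s'}$ and $t_{b'}$ were disjoint, both midpoints would contain the connecting path, hence intersect, hence be nested --- and this is perfectly realizable: on a path take $t_{s'}=[0,1]$, $t_{b'}=[3,4]$ and midpoints $[0.5,3.5]\supset[0.6,3.4]$, a valid overlap representation of the primed $4$-cycle with the named pair disjoint. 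So two internally disjoint three-vertex paths do not by themselves force the endpoint subtrees to nest; any correct argument must exploit $v_{b'}$'s adjacencies to $v_s$ and $v_b$ (and the already-established $t_s\subset t_b$), which is the route the paper takes via Rosgen's lemma and which your parenthetical fallback only gestures at without an actual argument.
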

\begin{proof}
We include here a sketch of our use of Rosgen's lemma.  As above, we can assume that $t_s \subset t_b$ without loss of generality.  Because the graph composed of $t_{s'}$ and $t_{b'}$ and the paths between them is isomorphic to $G_3^0$, we know that either $t_{s'} \subset t_{b'}$ or $t_{b'} \subset t_{s'}$.  If $t_{b'} \subset t_{s'}$, then by Rosgen's lemma we have that $t_s$ and $t_b$ and the paths between them that do not include $t_{b'}$ are also subtrees of $t_{s'}$.  However, because there is a path from $t_{s'}$ to $t_{s}$ that avoids the neighbourhood of $t_b$, we also have that $t_{s'}$ is a subtree of $t_s$, a contradiction.

% do I need the below for non-disjointness, maybe?

%To elucidate this property of $G_d^u$, we include here an informal proof: 

%To make the text self-contained, we show less technical proof that makes us more familiar with $G_d^u$. We realize that each path from $s$ to $b$ (in $G_d^0$) has to be represent in such a way that its first vertex exceeds some leaf of $t_s$ (to overlap with it), third exceeds some leaf of $t_b$ and the middle one has to overlap both of them. If $t_s\subset t_b$, $G_d^0$ can be perfectly represented (in a tree with at least $d$ leaves) so that individual paths span the underlying subtree from leaf of $t_b$ to leaf of $t_s$.

%Alternative, i.e., $t_s$ disjoint with $t_b$, is impossible, as otherwise, at most one path can be represented "directly", i.e., by spanning the path between $t_s$ and $t_b$. Let us denote for a ($sb$-)path $a$ its three vertices $a_1, a_2, a_3$. Considering at least three paths, $a, b$ and $c$, $t_{a_1}$ has to exceed different leaf of $t_s$ than $t_{b_1}$ and $t_{c_1}$ exceeds also a different leaf (than both remaining). As at most one (w.l.o.g. $t_{a_1}$) is represented on path from $t_s$ to $t_b$, $t_{b_2}$ and $t_{c_2}$ have to contain $t_b$ and yet they have to avoid overlapping one another while overlapping $t_{b_1}$ and $t_{c_1}$, respectively (but not both of them). Already this is an impossible configuration.
\end{proof}
% leftoff here.  JAE

Because in our reductions we consider classes of host trees that are closed under edge subdivision, we can assume without loss of generality that in any representation of $G_d^u$ in which any of $t_s, t_b, t_{s'}$ and $t_{b'}$ contain a branching-node, it contains also all its neighbours.  Note also that subdividing the edges of any tree cannot increase the number of leaves of that tree.  We will make use of these facts in the upcoming lemmas.  

% I don't think we need the subdivision proviso in the below, because subdivision doesn't increase leafage. 
% do we need to add "subdivision of a representation" definition earlier?  It might simplify the text.  
\begin{lemma}\label{l:spanbranch}
Let $T$ be a tree with $k$ leaves, and $s$ and $t$ be disjoint subtree of $T$ with $l$ and $k-l+2$ boundary nodes, respectively.  Then all nodes of $T$ of degree greater than two are either in $s$ or in $t$, and this remains true in any subdivision of the representation.
\end{lemma}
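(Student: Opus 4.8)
The plan is to argue by a counting argument on boundary nodes, combined with the fact (established just before the lemma) that any branching node of $T$ contained in $s$ or $t$ drags in all of its neighbours. First I would recall the elementary fact about trees: if $s$ is a subtree of $T$ with $\ell$ boundary nodes, then deleting (the interior of) $s$ from $T$ leaves exactly $\ell$ connected components, one hanging off each boundary node; symmetrically for $t$ with its $k-\ell+2$ boundary nodes. Since $s$ and $t$ are disjoint, $t$ lies entirely inside one of the components left by deleting $s$, and $s$ lies entirely inside one of the components left by deleting $t$. The interpath between $s$ and $t$ (the unique path in $T$ joining them) passes through exactly one boundary node of $s$ — call it $p_s$ — and exactly one boundary node of $t$ — call it $p_t$.

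Next I would count leaves. Each of the $\ell$ components obtained by deleting $s$ contains at least one leaf of $T$, and distinct components contain distinct leaves, \emph{except} that the component containing $t$ need not contribute a leaf outside $t$ — it could be that every leaf of $T$ in that direction is ``absorbed'' by $t$. But $t$ itself has $k-\ell+2$ boundary nodes, and of the components obtained by deleting $t$, exactly one contains $s$ (through $p_t$), so the remaining $k-\ell+1$ boundary nodes of $t$ each sprout a component containing at least one leaf of $T$ not on the $s$-side. Likewise, the $\ell-1$ boundary nodes of $s$ other than $p_s$ each sprout a component with a leaf of $T$ not on the $t$-side. These two families of components are disjoint, so we get at least $(\ell-1) + (k-\ell+1) = k$ distinct leaves of $T$ — and $T$ has exactly $k$. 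Hence every leaf of $T$ is accounted for: every component obtained by deleting $s$, other than the one containing $t$, is a \emph{path} ending in a single leaf, and similarly for $t$.

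Now I would deduce the degree claim. Any node $x$ of $T$ with $\deg(x)>2$ that is not in $s$ and not in $t$ lies in some component $C$ of $T\setminus s$; since $x\notin t$, $C$ is not the component containing $t$, so by the previous paragraph $C$ is a path, contradicting $\deg_T(x)>2$ (a node on a pendant path of $T$ has degree at most $2$ in $T$, as the component is attached to $T$ only at its boundary node). Hence every branching node of $T$ lies in $s$ or in $t$. For the stability under subdivision: subdividing edges never creates a node of degree greater than two and never changes the number of leaves (noted in the excerpt), and by the convention recorded before the lemma we may assume $s$ and $t$ each contain all neighbours of any branching node they contain, so subdivided copies of branching nodes stay inside $s$ or $t$; the same boundary-node counts carry over verbatim to any subdivision, so the conclusion is preserved.

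I expect the main obstacle to be the bookkeeping in the leaf count — specifically, making fully rigorous that the components hanging off the ``non-interpath'' boundary nodes of $s$ and of $t$ are genuinely disjoint and each contribute a fresh leaf, so that the total hits exactly $k$ and forces every such component to be a bare pendant path. Once that is pinned down, the degree statement and the subdivision-invariance are immediate.
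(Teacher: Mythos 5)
There is a genuine gap in your final paragraph. The inference ``since $x\notin t$, $C$ is not the component containing $t$'' is false: the component of $T\setminus s$ that contains $t$ is in general much larger than $t$ itself --- it contains everything beyond the boundary node $p_s$, in particular all interior nodes of the interpath and anything hanging off them. A branching node sitting there (say, a degree-$3$ node strictly between $s$ and $t$ with a third branch) satisfies $x\notin s$, $x\notin t$, lies in the $t$-containing component of $T\setminus s$ \emph{and} in the $s$-containing component of $T\setminus t$, so neither your ``pendant path'' conclusion for $s$ nor the symmetric one for $t$ applies to it. This is exactly the configuration the lemma must rule out, and it is the case the paper's proof attacks directly: it adjoins $s$, $t$, the path between them, the offending branching node $b$ and one extra neighbour of $b$ into a single subtree and counts $l+(k-l+2)-2+1=k+1$ leaves, contradicting the fact that a subtree of a tree with $k$ leaves has at most $k$ leaves (the paper's second case handles $b$ off the interpath the same way).

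Your global counting framework can be repaired rather than abandoned: since your leaf count is tight (exactly $k$ leaves, all lying in the components you counted), a branching node $x$ in the uncovered middle region has, after removing $x$, at least three components of which $s$ and $t$ occupy at most two; a leaf of $T$ reached through the remaining component connects to both $s$ and $t$ only through $x$, hence lies in the $t$-containing component of $T\setminus s$ and the $s$-containing component of $T\setminus t$, so it belongs to neither counted family --- a $(k+1)$-st leaf, contradiction. You should add this case explicitly. A smaller inaccuracy: deleting $s$ need not leave ``exactly $\ell$ components, one per boundary node,'' since a boundary node may have several neighbours outside $s$; this only strengthens your lower bound, but the tightness discussion should be phrased accordingly.
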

\begin{proof}
We proceed by contradiction: assume there is a branching node $b$ that is not in $s$ or $t$.  If $b$ is on the unique path between $s$ and $t$, then consider the subtree formed by $s$, $t$, the path between them (which includes $b$), and at least one neighbour of $b$ that is not on that path (guaranteed to exist because $b$ is of degree greater than 2).  
Now consider the number of leaves of this subtree:  all but one of the leaves of $s$ is a leaf of this new subtree, all but one of the leaves of $t$ is a leaf of this new subtree, and one neighbour of $b$ is a leaf of the new subtree.  Then the subtree has $l + k -l + 2 -2 + 1 = k+1$ leaves - impossible for a subtree of $T$, which has $k$ leaves. 

If $b$ is not on the path between $s$ and $t$ that argument proceed similarly: we produce a subtree that includes $s$, $t$, and $b$ and all of its neighbours, and the paths between these, producing a subtree with $k+1$ vertices - a contradiction.

This result holds in any subdivision of the representation because subdivision of any tree does not affect its leafage. 

\end{proof}
Also note that the proof of Lemma~\ref{l:tstbcontained} gives us as a corollary the following useful statement:
\begin{lemma}\label{l:tstbdleaves}
In any subtree-overlap representation of $G_d^0$, both, $t_s$ and $t_b$ have at least $d$ leaves.
\end{lemma}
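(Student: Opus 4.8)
The plan is to reprise the Rosgen-based reasoning from the proof of Lemma~\ref{l:tstbcontained}, and then count leaves by looking at how the disjoint "middle" subtrees of the parallel paths are forced to stick out of $t_s$ and to branch inside $t_b$. Throughout, write the $i$-th length-$3$ path of $G_d^0$ as $v_s - a_i - b_i - v_b$, and assume, as justified by Lemma~\ref{l:tstbcontained}, that $t_s \subset t_b$. First I would record two structural facts. (i) For each $i$, Lemma~\ref{lem:Rosgen3.14} applies to the non-adjacent pair $v_s, v_b$ with $t_s \subset t_b$, and the single edge $v_s a_i$ is a path from $v_s$ to $a_i$ avoiding $N(v_b) = \{b_1,\dots,b_d\}$; hence $t_{a_i} \subsetneq t_b$. (ii) The subtrees $t_{a_1},\dots,t_{a_d}$ are pairwise disjoint: since $a_i \not\sim a_j$ for $i\neq j$, either they are disjoint or one contains the other, so suppose $t_{a_i}\subseteq t_{a_j}$; because $a_i\sim b_i$ we have $t_{b_i}\between t_{a_i}$, so $t_{b_i}$ meets $t_{a_j}$, and as $b_i\not\sim a_j$ this forces $t_{b_i}\subseteq t_{a_j}$ (the alternative $t_{a_j}\subseteq t_{b_i}$ would give $t_{a_i}\subseteq t_{b_i}$, contradicting $t_{a_i}\between t_{b_i}$); but $b_i\sim v_b$ gives $t_{b_i}\between t_b$, so $t_{a_j}$ has a node outside $t_b$, contradicting (i).

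Next I would show $t_s$ has at least $d$ leaves. Since $t_{a_1}$ and $t_{a_2}$ are disjoint and each overlaps $t_s$, the tree $t_s$ has at least two nodes. For each $i$, since $t_{a_i}$ overlaps $t_s$ it contains an edge $w_i y_i$ with $w_i\in t_s$ and $y_i\notin t_s$, so $w_i$ is a boundary node of $t_s$ lying in $t_{a_i}$; by fact (ii) the $w_i$ are pairwise distinct. Now invoke the standing assumption (permissible because the host-tree classes are closed under subdivision) that whenever $t_s$ contains a branching node it contains all of that node's neighbours: since $w_i\in t_s$ but its neighbour $y_i\notin t_s$, the node $w_i$ cannot be a branching node of $T$, so $\deg_T(w_i)\le 2$; as $t_s$ is connected with at least two nodes, $w_i$ also has a neighbour in $t_s$, hence $\deg_T(w_i)=2$ and $\deg_{t_s}(w_i)=1$, i.e. $w_i$ is a leaf of $t_s$. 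So $t_s$ has at least $d$ leaves.

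Then I would show $t_b$ has at least $d$ leaves. Delete the nodes of $t_s$ from $t_b$; in the resulting forest, for each $i$ the nonempty set $t_{a_i}\setminus t_s$ (nonempty since $t_{a_i}$ overlaps $t_s$, and contained in $t_b$ by fact (i)) lies in a single component $C_i$, which attaches to $t_s$ by the single edge $w_i y_i$; since the $w_i$ are distinct, $C_1,\dots,C_d$ are $d$ distinct components. Every component $K$ of $t_b\setminus t_s$ contains a leaf of $t_b$: $K$ attaches to $t_s$ by exactly one edge (tree property), so any leaf of the tree $K$ other than its attachment node has all its $t_b$-neighbours inside $K$ and so has degree $1$ in $t_b$; if $K$ is a single node, that node has its only $t_b$-neighbour in $t_s$ and is again a leaf of $t_b$. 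Choosing one such leaf in each $C_i$ gives $d$ distinct leaves of $t_b$, completing the proof.

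The step I expect to be the main obstacle is the upgrade from ``boundary node'' to ``leaf'': a priori the $w_i$ could be high-degree internal nodes of $t_s$, and the only thing that rules this out is the standing subdivision-normalisation assumption, so the proof must be careful to apply it correctly (and to note that subdivision does not change leafage, so nothing is lost). The remaining bookkeeping — pairwise disjointness of the $t_{a_i}$ and the containments $t_{a_i}\subset t_b$, both of which drop out of Rosgen's Lemma together with the overlap/adjacency pattern of $G_d^0$ — is routine, and the counting argument for $t_b$ relies essentially on those containments so that the extra branches genuinely live inside $t_b$.
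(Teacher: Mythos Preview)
Your argument is correct and is precisely the Rosgen-based reasoning the paper has in mind when it says the lemma ``follows from the proof of Lemma~\ref{l:tstbcontained}'': you extract from that proof the containments $t_{a_i}\subset t_b$ and the pairwise disjointness of the $t_{a_i}$, then combine these with the standing subdivision normalisation to turn boundary nodes into genuine leaves. One cosmetic point: $t_{a_i}\setminus t_s$ need not lie in a single component of $t_b\setminus t_s$, but your argument only uses the component containing $y_i$, and since each component of $t_b\setminus t_s$ attaches to $t_s$ by a unique edge and the $w_i$ are distinct, the conclusion is unaffected; for $t_b$ one could alternatively run the symmetric argument on the $t_{b_i}$ directly, but your component count is equally valid.
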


		%OK (read)
\section{Subtrees in restricted trees}

In this section we describe the use of the auxiliary graphs constructed in the previous section to show hardness of recognition for graphs represented by subtrees in restricted trees. The first of our theorems is the following:
\begin{theorem}\label{t:fixedleaves}
REC-T-$k(G)$ is NP-complete.
\end{theorem}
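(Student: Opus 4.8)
The plan is to give a reduction from 3-CON-$k$-COL($G$) to REC-T-$k(G)$ using the $G_d^u$-blocked graph construction. Given a $3$-connected graph $G$, I would output $G'' = (V'', E'')$, the $G_d^u$-blocked graph of $G$ with $d = k$ and $u$ chosen large enough (say $u \geq 2$, and in fact $u$ chosen as a fixed small constant suffices since the blocker merely has to occupy branching nodes); the claim is that $G''$ is a $k$-SOG (i.e., representable by subtrees of a tree with $k$ leaves) if and only if $G$ is $k$-colourable. Both directions need work, but the transformation is clearly polynomial and $G''$ is in NP by Cenek's result cited above, so establishing the equivalence is the whole content.

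For the ``only if'' direction (a $k$-representation of $G''$ yields a $k$-colouring of $G$), I would argue as follows. By Lemma~\ref{l:tstbcontained} we may assume $t_s \subset t_b$, and by Lemma~\ref{l:tstbdleaves} both $t_s$ and $t_b$ have at least $d = k$ leaves; since the host tree has only $k$ leaves, $t_s$ and $t_b$ together must be ``spread across'' the tree. The key structural step is to show that $t_s$ and $t_b$ occupy every branching node of the host tree --- this should follow from an argument in the spirit of Lemma~\ref{l:spanbranch} (applied to the disjoint pair $t_{s'} \subset t_{b'}$ versus the $t_s,t_b$ pair, or by a direct leaf-counting argument on $t_b$ alone having $k$ leaves so that $t_b$ contains all branching nodes). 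Because every edge-representative $v_{(i,j)}$ and every brother $f(v_i)$ is adjacent to both $v_s$ and $v_b$, and because $t_s \subset t_b$, a standard overlap argument forces each such subtree to overlap $t_s$ and hence to reach into the region $t_b \setminus t_s$ near a branching node; combined with the fact that the brothers and edge-representatives form a clique (via $E_2$), and that a vertex-representative $t_i$ must overlap exactly the edge-representatives of edges incident to $v_i$ (via $E_1$) while being non-adjacent to its brother $f(v_i)$ (here I'd use $E_3$ to force $t_i$ and $t_{f(v_i)}$ into a containment or disjointness relation) --- the upshot is that each $t_i$ for $v_i \in V_1$ must lie entirely within a single twig of the host tree. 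Since the tree has exactly $k$ twigs, the twig containing $t_i$ is a colour; two adjacent vertices $v_i, v_j$ of $G$ have an edge-representative $t_{(i,j)}$ overlapping both, and if $t_i, t_j$ lay on the same twig one shows $t_{(i,j)}$ cannot simultaneously overlap both without a contradiction, except in the bounded number of \emph{illegal pair} configurations. By pigeonhole over the six copies $V_a,\dots,V_f$, at least one copy has no illegal pair, giving a proper $k$-colouring of $G$. This is invoking Observation~\ref{obs:niceRepImpliesColouring} once we know the relevant copy of $V$ is nicely represented.

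For the ``if'' direction, given a proper $k$-colouring $c$ of $G$, I would build an explicit representation: take a star-like host tree with one central branching node and $k$ twigs, twig $\ell$ associated with colour $\ell$; place $t_s$ as a small subtree containing the central node and short stubs into each twig, $t_b \supset t_s$ slightly larger; for each vertex-representative $v_i$ of colour $c(v_i)$ place $t_i$ as an interval on twig $c(v_i)$, positioned so that vertex-representatives of the same colour are pairwise disjoint (possible since same-colour vertices are non-adjacent in $G$, hence in $G''$); place each brother $t_{f(v_i)}$ as a subtree overlapping $t_s$ and $t_i$ appropriately, with short stubs off the central node to make all brothers and edge-representatives pairwise overlapping; for each edge $(v_i,v_j)$ of $G$ place the edge-representative $t_{(i,j)}$ to reach from twig $c(v_i)$ into twig $c(v_j)$ (possible because $c(v_i) \neq c(v_j)$), overlapping $t_i$, $t_j$, $t_s$, $t_b$ and all other edge-representatives and brothers, but containing no other vertex-representative. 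One then checks the overlap graph of this family is exactly $G''$; this is a routine but somewhat lengthy verification of the adjacency classes $E_1$ through $E_6$.

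The main obstacle I anticipate is the ``only if'' direction, specifically pinning down precisely which configurations count as illegal pairs and proving that their number is bounded by a constant independent of $|V|$ --- the whole six-copies trick only works if that bound is at most five. The delicate point is that a vertex-representative might fail to be confined to a single twig (e.g.\ it might straddle the central region or contain a branching node), or two adjacent vertex-representatives might end up sharing a twig in a way not immediately forbidden; I would need to carefully enumerate these finitely many bad cases, using the leaf-count constraints and the containment relations $t_s \subset t_b$, $t_{s'}\subset t_{b'}$ to show each host tree admits only boundedly many of them. I expect the rest --- in-NP, polynomiality, the explicit construction for ``if'' --- to be straightforward given the lemmas already established.
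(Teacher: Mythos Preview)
Your overall architecture matches the paper: reduce from 3-CON-$k$-COL, output a blocked graph $G''$, and argue that among the six copies of $V(G)$ at least one is nicely represented. Two points deserve comment.

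First, a slip: by $E_3$, each vertex-representative $v_i$ is \emph{adjacent} to its brother $f(v_i)$, so $t_i$ and $t_{f(v_i)}$ must overlap, not sit in containment or disjointness. The brother mechanism works because $t_{f(v_i)}$ overlaps $t_i$, $t_s$, $t_b$ and every other brother and edge-representative, while overlapping no \emph{other} vertex-representative; that asymmetry is what forbids nested vertex-representatives in the arguments you will need.

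Second, and more substantively, the paper does not take $d=k$. It sets $d=3$ throughout, with $u=0$ for $k=3$ and $u=k-2$ for $k>3$. Your choice $d=k$ does force $t_s$ (with $k$ leaves in a $k$-leaf host) to contain every branching node, so the $s',b'$ part becomes redundant and your ``$u\ge 2$'' is not doing the job you describe. The paper's parameter choice is aimed precisely at the step you yourself flag as delicate. With $d=3$, $t_s$ sits over a single lastbranch; with $u=k-2$, $t_{b'}$ is forced (via Lemma~\ref{l:spanbranch}, applied to the disjoint pair $t_s,t_{b'}$) to cover all remaining branching nodes. Crucially, no vertex-representative can be a subtree of $t_{b'}$: its brother would have to overlap both $t_s$ (disjoint from $t_{b'}$) and the vertex-representative (inside $t_{b'}$), hence overlap $t_{b'}$ --- but brothers are non-adjacent to $v_{b'}$. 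This confines the stray vertex-representatives to the \emph{single path} inside $t_s$ between its lastbranch and $t_{b'}$, where an edge-representative argument bounds them to one (after an adjustment). Under your $d=k$, by contrast, the non-twig region of $t_s$ is the whole core of the host tree --- all branching nodes together with all paths between them --- and nothing in your sketch bounds how many pairwise-disjoint vertex-representatives can sit there; for a caterpillar with $k$ leaves this core is an arbitrarily long path. That is exactly the gap the paper's $(d,u)=(3,k-2)$ choice is engineered to close; if you keep $d=k$ you will need a genuinely different argument for this step, not merely a finite enumeration.
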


In order to show this theorem, for a given $k$ we reduce 3-CON-$k$-COL$(G)$ to REC-T-$k(G)$: because the case in which $k=3$ is notably simpler, we start with that case as an example of the main ideas.  We will then move into the more general case, which has significantly more required technical detail.  After finishing with  REC-T-$k(G)$, we will turn our attention to a similar problem with a similar reduction: REC-SUB-T$(G)$.  

For a given 3-connected graph $G$ (instance of 3-CON-$k$-COL$(G)$), we define a graph $G''$ as the $G_d^u$-blocked graph $G$, where $d=3$ and for $k=3$, $u=0$, otherwise $u=k-d+1$. In the rest of this section we show that $G$ is $k$-colourable if and only if $G''$ has subtree overlap representation on some tree with $k$ leaves.

As we have outlined previously, the common idea of these reductions is to block-off the central part of the host tree in any representation, requiring representation of vertices in target instance corresponding to vertices in the colouring instance to be represented on $k$ twigs of the host tree.  The representation is engineered such that vertices represented on twigs in this way correspond to acceptable colour classes in the colouring instance: this is achieved using the edge-representative vertices, which can only be represented if their endpoints in the colouring instance are on different twigs: thus two vertex-representatives in the representation can only be on the same twig if they are non-adjacent in the source colouring instance.  

Unfortunately, there are a number of ways in which this representation can "go wrong" with respect to corresponding to acceptable colour classes.  Showing that these possible problems are limited in number, and that therefore working with multiple copies of the original instance can surmount this difficulty contributes most of the technical difficulty in the upcoming proofs.  

Most of these details are not needed in the much-simpler case of $k=3$, and we therefore use this case as a technical warm-up, with the aim of making the core ideas clear before we embark on the more detailed, but essentially similar, case for general $k > 3$.

\subsection{Special case: REC-T-$3(G)$}
Because we will re-use parts of this proof in upcoming reductions, we split the statement into two separate lemmas, first showing that if $G$ is 3-colourable, then the $G_3^0$-blocked graph of $G$ is an overlap graph of subtrees in a tree with 3 leaves, and then showing the reverse implication. 

\begin{lemma}
Given a 3-connected graph $G$, if $G$ is 3-colourable, then the $G_3^0$-blocked graph of $G$ is an overlap graph of subtrees in a tree with 3 leaves.
\end{lemma}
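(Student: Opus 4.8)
The plan is to exhibit an explicit overlap representation of the $G_3^0$-blocked graph $G''$ of $G$ on a tree with exactly three leaves, namely a subdivision of $K_{1,3}$. First I would fix a proper $3$-colouring $c$ of $G$ with colour set $\{1,2,3\}$; since $G'$ is six disjoint copies of $G$ this extends to a proper $3$-colouring of $G'$, hence of the vertex-representatives $V_1$. The host tree $T$ has a single branching node $x$ with three long twigs $L_1,L_2,L_3$, where $L_i$ is "the twig for colour $i$". On each twig, going out from $x$, I reserve: an innermost stretch that $t_s$ will occupy (out to depth $\sigma$), then a stretch that $t_b\supset t_s$ will occupy (out to depth $\beta>\sigma$), and finally, far out (depth $>\beta$), a sequence of pairwise-disjoint small intervals, one per vertex-representative of that colour taken over all six copies, in an arbitrary fixed left-to-right order, whose ranks I record. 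I may take $\sigma$, and hence the number of subdivision nodes near $x$, as large as I like, since subdivision never increases leafage, so $T$ still has three leaves. (Note that only $3$-colourability, not $3$-connectedness, is used in this direction.)

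With $T$ fixed I would place the remaining vertices. The blocker $G_3^0$ is represented near $x$ in the direct way: $t_s$ and $t_b$ as the two nested neighbourhoods of $x$ just described, and the three paths between $v_s$ and $v_b$ laid one along each twig with their interior vertices placed so as to straddle the boundaries of $t_s$ and of $t_b$, overlapping $t_s$ and $t_b$ exactly as the edges of $G_3^0$ demand (all of this happening between depths $\sigma/2$ and $\beta+2$). An edge-representative for an edge $(u,w)$ of some copy, with $c(u)=i$ and $c(w)=j$, is represented by a subtree that contains $x$ and has two \emph{deep arms}: one along $L_i$ reaching into the interior of the interval of $t_u$, one along $L_j$ reaching into the interior of $t_w$. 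This makes it overlap $t_u$ and $t_w$, and, because the intervals of the other vertex-representatives on $L_i,L_j$ are then either contained in it (those closer to $x$) or disjoint from it, it is non-adjacent to every other vertex-representative, as required; it also overlaps $t_s$ and $t_b$ and avoids the $G_3^0$ interiors. A brother $f(v)$ with $c(v)=i$ gets one deep arm along $L_i$, into the interior of $t_v$ and deeper there than any edge-representative's arm.

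The crux is to give, in addition, each edge-representative one short \emph{buffer arm} and each brother two short buffer arms — each reaching only a short way into a twig and staying inside $t_s$ — so that $V_2\cup V_3$ together with $t_s$ and $t_b$ becomes a pairwise-overlapping family. Since all of these subtrees contain $x$, this is equivalent to making their extent vectors (the depths reached into $L_1,L_2,L_3$) an antichain in $\mathbb{R}^3$ under the coordinatewise order. I would reserve, inside $t_s$ on each twig, an inner band for edge-representative buffers and an outer band for brother buffers, so that every brother buffer arm is strictly longer than every edge-representative buffer arm; an edge-representative puts its buffer arm on the twig of its missing colour, a brother on the two twigs not of its colour. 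Two of these subtrees whose short/missing twigs differ are automatically incomparable: on one's short twig the other reaches deep, and symmetrically. Edge-representatives sharing a missing colour $c$ are already incomparable on their two deep twigs unless the rank-pairs of their endpoints are coordinatewise comparable; to break those ties I choose the $L_c$-buffer lengths to be order-reversing with respect to that partial order, which is possible because every partial order has a linear extension. Together with the rule that brother buffers beat edge buffers and the rule that a brother's deep arm is the deepest thing in its interval, the analogous reasoning settles brother--edge-representative and brother--brother pairs.

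What remains is routine verification: reading off the positions fixed above, one checks for every pair of vertices of $G''$ that the corresponding subtrees overlap exactly when the vertices are adjacent in $G''$, going through the edge classes $E_1,\dots,E_6$ and the non-adjacencies (vertex-representatives are pairwise disjoint or nested; each vertex-representative $v$ is disjoint from or nested with $t_s,t_b$, the $G_3^0$ interiors, every non-incident edge-representative, and every brother $f(v')$ with $v'\neq v$; and so on). I expect the only genuine obstacle to be the step of the previous paragraph — turning $V_2\cup V_3$ into a clique of the overlap graph while the vertex-representatives have already pinned down the deep ends of most of these subtrees; everything else is bookkeeping along the twigs.
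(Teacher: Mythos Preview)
Your approach is correct and follows essentially the same construction as the paper: a subdivided $K_{1,3}$ with $t_s\subset t_b$ at the centre, vertex-representatives placed on twigs according to colour class, edge-representatives and brothers laid through the centre to reach the relevant vertex-representatives, and then short extra arms near $x$ added to force $V_2\cup V_3$ into a pairwise-overlapping family. The only difference is in how that last step is organised---the paper subdivides the three central edges into paths $P_1,P_2,P_3$, sorts the brothers and edge-representatives by size, and extends them along the appropriate $P_i$ in reverse size order, whereas you phrase the same task as making the depth-vectors an antichain in $\mathbb{R}^3$ and use linear extensions to pick buffer lengths; these are interchangeable bookkeeping devices for the same idea (just be sure your buffer bands sit strictly below the depth $\sigma/2$ where your $G_3^0$ path-interiors begin, and that the vertex-representative intervals start strictly beyond $\beta+2$, so no unwanted overlaps creep in).
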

\begin{proof}

First, we show that if $G$ is 3-colourable, then the $G_3^0$-blocked graph of $G$ is an overlap graph of subtrees in a tree with 3 leaves. We start by representing the vertices of $G_3^0$ with $t_s \subset t_b$ and the node of degree three in them both.  The three paths between them are represented along the three paths from that branching node, as in Figure~\ref{fig:exRep}.  

Let $C_1, C_2, C_3$ be the three colour classes of a 3-colouring of $G$.  For convenience, we label the twigs of our host tree as twigs 1, 2, and 3.  In any arbitrary order along the twig, we represent the vertex-representatives corresponding to vertices in $G$ by two-node subtrees, with a vertex-representative corresponding to a vertex in $C_i$ on twig $i$.  Thus far, the vertex-representatives are an independent set, and the adjacencies between vertices in $G_3^0$ are correct.  We must now represent the edge-representatives and the brothers of the vertex-representatives.   We will describe their placement such that they have the correct overlapping with vertex-representatives and vertices of $G_3^0$, and then describe an alteration to them that allows them to all pairwise overlap, as required.  

We assign to each edge-representative $e=(u, v)$ the unique path overlapping both the subtree for $u$ and the subtree for $v$ (already represented on different twigs, as they are vertex-representatives and cannot be in the same colour class).  We assign to each brother $f(u)$ the unique path that has as its leaves the branching node and the node of the subtree representing $u$ that is closest to that branching node.  Observe that each brother overlaps only the vertex-representative that it ought to, each edge-representative overlaps only the two vertex-representatives that it ought to, and both the brothers and the edge-representatives overlap exactly $t_s$ and $t_b$ of the vertices of $G_3^0$.  

However, the edge-representatives and the brothers are not yet guaranteed to all pairwise overlap: we remedy this with a small adjustment.  

If $N$ is the number of edge-representatives and brothers, then we add to the host three paths of length $N$  by subdividing the edges between the branching node and its neighbours - for convenience, call these paths $P_1, P_2, P_3$.  We add these paths also to both $t_s$ and $t_b$, but not yet to any other subtrees - note that no other subtrees contained this subdivided edge.  We will use these paths to produce pairwise overlapping within the set of edge-representatives and brothers.  For $P_1$, we take the set of subtrees representing brothers and edge-representatives that have endpoints on the twigs containing $P_2, P_3$.  We sort these by size, breaking ties arbitrarily (note that two subtrees with the same size already overlap, as no two are equal), and extend these subtrees out along $P_1$, extending the previously-smallest the farthest, and the previously-biggest the least, such that no two share a leaf on $P_1$.  We repeat this operation on $P_2, P_3$.  As a result, all subtrees representing brothers and edge-representatives now overlap, and no other overlapping has been impacted.  This is then a representation of the $G_3^0$-blocked graph of $G$ on a host tree with three leaves.  

A schematic of this complete representation can be found in Figure~\ref{fig:exRep}.

\end{proof}

The essential form of the previous lemma's proof will be repeated, but with more required technical details, in all of the upcoming reductions. 

We must now prove the other direction, and show that if the $G_3^0$-blocked graph of $G$ is an overlap graph of subtrees in a tree with 3 leaves, then $G$ is 3-colourable.  We state it in a slightly more general way, to facilitate its reuse later.

\begin{lemma}
\label{l:previous}
Let $G = (V, E)$ be a 3-connected graph, and $G''$ be the $G_d^0$-blocked graph of $G$.  If $G''$ is the overlap graph of subtrees $\mathcal{T}$ of a tree $T$ with $k = d$ leaves and no node of degree greater than two but less than $d$, then $G$ is $k$-colourable.
\end{lemma}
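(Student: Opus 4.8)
The plan is to show that a subtree-overlap representation of $G''$ on a tree $T$ with $k=d$ leaves and no vertex of degree strictly between $2$ and $d$ forces essentially all vertex-representatives to be placed on the $k$ twigs of $T$ in a way that yields a proper $k$-colouring of $G$. First I would invoke Lemma~\ref{l:tstbcontained} to assume without loss of generality that $t_s \subset t_b$, and then Lemma~\ref{l:tstbdleaves} to conclude that both $t_s$ and $t_b$ have at least $d$ leaves; since $T$ has exactly $d$ leaves, each of $t_s$, $t_b$ must in fact span all $d$ leaves of $T$, so in particular $t_b$ (and hence everything) lives in a tree whose only high-degree node is the unique branching node $p$ (of degree $d$, by the hypothesis that there is no degree strictly between $2$ and $d$). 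Next I would argue that the subtree of every edge-representative and every brother overlaps both $t_s$ and $t_b$: these vertices are adjacent to $v_s$ and $v_b$ in $G''$ (edge sets $E_4, E_5$) and form a clique ($E_2$), and combined with $t_s \subset t_b$ this pins down their gross position — each such subtree must stick out of $t_b$ on exactly one twig, or straddle $p$, while still meeting $t_s$.

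The heart of the argument is then to show the vertex-representatives $V_1$ are nicely represented with respect to $G$, up to finitely many exceptions. A vertex-representative $v_i$ is adjacent in $G''$ only to its brother $f(v_i)$ (edge $E_3$) and to the edge-representatives of edges of $G$ incident to $v_i$ (edge set $E_1$); it is non-adjacent to $v_s, v_b$ and to all other vertex-representatives. I would show that $t_i$ must be disjoint from $t_s$: if $t_i$ met $t_s$ it would either be contained in $t_s$ or overlap it, and since $v_i \not\sim v_s, v_b$ one can use Rosgen's Lemma~\ref{lem:Rosgen3.14} together with the $3$-connectedness of $G$ (giving paths among vertex-representatives and edge-representatives avoiding the neighbourhood of $v_s$ or $v_b$) to derive a contradiction with the required overlaps. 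Being disjoint from $t_s$, which spans all branches from $p$, forces $t_i$ into a single twig — except possibly for a bounded number of ``illegal'' placements where $t_i$ nonetheless reaches toward $p$ or where two adjacent vertex-representatives land on the same twig. Here I would lean on Lemma~\ref{l:spanbranch}: if two disjoint subtrees together have enough boundary nodes they must swallow all branching nodes, which constrains how many vertex-representatives can behave badly simultaneously. Then for an edge-representative $t_{(u,v)}$ with $u,v$ on twigs $\tau, \tau'$: $t_{(u,v)}$ must overlap both $t_u$ and $t_v$ as well as $t_s$, and the only way a subtree can reach two distinct twigs of this tree is to contain $p$ and hence meet both twigs' branches — but then if $\tau = \tau'$ the subtree $t_{(u,v)}$ overlapping $t_u$ forces $t_u$ itself to extend past $p$, contradicting disjointness from $t_s$; so $u,v$ are on distinct twigs, i.e. $(u,v)$ is properly coloured.

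Finally I would package this: since $G''$ contains six disjoint copies of $G$ (vertex sets $V_a,\dots,V_f$) and the number of illegal pairs is bounded by an absolute constant independent of $|V|$ (by the counting argument via Lemma~\ref{l:spanbranch}), at least one copy of $G$ among the six is represented entirely legally, so its vertex-representatives sit on $k$ twigs with no two adjacent ones sharing a twig; mapping twig $i$ to colour $i$ gives a proper $k$-colouring of $G$, as recorded in the spirit of Observation~\ref{obs:niceRepImpliesColouring}. The main obstacle I expect is the bookkeeping around illegal pairs: carefully enumerating every way a vertex-representative or edge-representative can be placed ``off-twig'' (spanning $p$, nesting inside another subtree, or two adjacent ones colliding on a twig), proving each such event consumes boundary nodes or branches in a way that Lemma~\ref{l:spanbranch} limits, and thus bounding the total number of bad vertex-representatives by fewer than the number of copies. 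The contained-case reasoning (ruling out $t_i \subset t_s$, $t_i \subset t_b$, or an edge-representative nested inside a vertex-representative) via Rosgen's lemma and $3$-connectedness is the other delicate point, since it is exactly where the hypotheses on $G$ and on the host tree's degrees get used.
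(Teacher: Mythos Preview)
Your overall shape is right, but two of the load-bearing claims fail, and the mechanism you propose for bounding illegal pairs is not the one that actually works here.

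First, from Lemma~\ref{l:tstbdleaves} you cannot conclude that $t_s$ and $t_b$ span all $d$ leaves of $T$; you can only conclude that each contains the unique branching node $p$ and extends into every twig (which, after subdivision, you may assume includes all neighbours of $p$). More seriously, the step ``show that $t_i$ must be disjoint from $t_s$'' is false and your proposed Rosgen-lemma argument cannot establish it. In $G''$ the neighbourhood of $v_s$ (and of $v_b$) already contains every brother and every edge-representative via $E_4, E_5$, and a vertex-representative $v_i$ is adjacent only to its brother and its incident edge-representatives; hence there is no nontrivial path from $v_i$ that avoids $N(v_s)$, and Rosgen's lemma gives nothing. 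In fact vertex-representatives \emph{can} sit inside $t_s$: the paper explicitly allows up to one $t_i$ with $t_b \subset t_i$, up to one $t_i \subset t_s$ containing $p$, and, crucially, in every illegal pair one member is a subtree of $t_s$ while the other is disjoint from $t_b$. So illegal pairs cannot be dismissed by a disjoint-from-$t_s$ argument.

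Second, Lemma~\ref{l:spanbranch} does no work in this lemma (there is a single branching node), and it does not bound illegal pairs. The actual bound comes from a different chain of observations: illegal pairs lie on a single twig (edge-representatives of pairs on different twigs would be disjoint), they cannot be nested (their edge-representatives would then be comparable), and then a sweep along the one twig shows that at most two of the vertices occurring in illegal pairs can have any legal neighbour. If there were two or more illegal pairs, these at most two vertices would separate the twig-bound vertices from the rest, contradicting $3$-connectedness of $G$; hence there is at most one illegal pair. Combined with the at most two vertex-representatives touching $p$, at most three copies of $G$ can be spoiled, and one of the six copies is nicely represented. Your plan needs to replace the $t_i \cap t_s = \emptyset$ claim and the Lemma~\ref{l:spanbranch} counting with this non-nesting plus cutset argument; that is where the $3$-connectedness hypothesis is genuinely used.
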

\begin{proof}

Let $G = (V, E)$, $G'' = (V'', E'')$, $T$, $\mathcal{T}$, $d$ and $k$ be as described in the Lemma statement. We will show that any representation must be reasonably close to a nice one (in the sense of Observation \ref{obs:niceRepImpliesColouring}). 

More precisely, we will show that at most four vertex-representative subtrees deviate from a nice representation.  By the definition of a nice representation, if a vertex-representative deviates from a nice-representation, then either:
\begin{enumerate}
\item it is a member of an illegal pair (i.e., representatives of two neighboring vertices in $G$ are on the same twig in the representation of $G''$), or
\item it is not contained in a twig, and therefore contains the branching node.
\end{enumerate}

From Lemma~\ref{l:tstbcontained} we may conclude that w.l.o.g., $t_s\subset t_b$, and from Lemma~\ref{l:tstbdleaves} we see that both $t_s$ and $t_b$ must contain the branching vertex and (because we can subdivide representations without changing their overlap graph) we may assume that $t_s, t_b$ both also contain all neighbours of that branching node.

Recall that in constructing $G''$ we included six disjoint copies of the vertex set of $G$. To show that there is at least one copy (of the vertices of the original graph) nicely represented, we observe the following:
\begin{enumerate}
\item At most one vertex-representative can be represented as a supertree of $t_b$.  
\begin{itemize}
  \item 
  We proceed by contradiction: assume that $u, v$ are vertex representatives such that $t_b \subset t_u$ and $t_b \subset t_v$,  without loss of generality, assume that $t_u \subset t_v$ (they cannot overlap, as vertex-representatives are an independent set in $G''$).  Then the brother of $t_v$ cannot be represented: it must overlap $t_b$ and $t_v$, but not $t_u$; an impossibility because $t_b \subset t_u \subset t_v$.  
\end{itemize}
\item At most one vertex-representative can be represented as a subtree of $t_s$ containing the branching node.  
\begin{itemize}
  \item 
  We proceed by contradiction: assume that $u, v$ are vertex representatives such that (without loss of generality) $t_u \subset t_v \subset t_s$ and $t_u, t_v$ contain the branching node.  Then the brother of $t_u$ cannot be represented: it must overlap $t_s$ and $t_u$, but not $t_v$; an impossibility because $t_u \subset t_v \subset t_s$.  
\end{itemize}
\item For two vertex-representatives $u, v$ such that, both $t_u, t_v$ are either disjoint from $t_s$ or contained in $t_s$, it holds that $t_u$ and $t_v$ are disjoint.%$t_u\not\subset t_v$ (and by symmetry$t_v\not\subset t_u$).
\begin {itemize}
 \item If $t_u$ and $t_v$ are both disjoint from $t_s$, then they are on the twigs of the representation.  We proceed by contradiction: if they are not disjoint, then without loss of generality $t_u \subset t_v$.  The brother of $t_u$ then overlaps $t_v$, as it intersects $t_u$ (and therefore $t_v$), does not contain all nodes of $t_u$ (and therefore of $t_v$), and contains at least some nodes in $t_s$ (which are not in $t_v$), a contradiction.  
\end{itemize}

\item Given an illegal pair, one member of the pair must be represented as a subtree of $t_s$ while the other is disjoint from $t_b$.
\begin {itemize}
 \item Let $t_u, t_v$ be the illegal pair - recall that this means that $u, v$ are adjacent in $G$, and $t_u, t_v$ are on the same twig of the representation.  The edge-representative of $(u, v)$ must overlap both $t_s$ and $t_b$, as well as $t_u$ and $t_v$.  If both $t_u, t_v$ are disjoint from $t_s$, then the edge representative is confined to the path between them and is disjoint from $t_s$.   If both are subtrees of $t_b$ then the edge representative is confined to the path between them and is a subtree of $t_b$.  
\end{itemize}

\item No vertex-representative is a subtree of $t_b$ but disjoint from $t_s$.
\begin {itemize}
 \item 
  Let $t_u$ be a vertex-representative subtree that is a subtree of $t_b$.  Then, by Rosgen's Lemma, $t_u$ must not be a supertree of any subtree representing a vertex in $G_3^0$. Here we consider two cases of contradiction, showing that the brother of the vertex-representative restricts us in both cases. 
 
  If $t_u$ intersects any subtree representing a vertex in $G_3^0$ other than $t_s, t_b$ it must therefore be contained in it, and the brother subtree $t_{f(v)}$  could not overlap $t_s$ and $t_b$, but not $t_x$, as required.  
 
 If $t_u$ does not intersect any of the subtrees representing vertices in $G_3^0$ other than $t_b$, there must be nodes in $t_b$ that are not in any of those subtrees.  Enright \cite{Jess} showed that in any representation of $G_3^0$ on a host tree with one vertex of degree three in which there are nodes in $t_b$ that are not in any other subtree, one of the subtrees (call it $t_x$) representing a neighbour of $v_b$ must contain $t_s$.  Then the brother subtree $t_{f(v)}$  could not overlap $t_s$ and $t_b$, but not $t_x$, as required.  
 \end{itemize}

\item Illegal pairs must not be nested, i.e., given a pair of illegal pairs $t_u,t_v$ and $t_w,t_x$ (on the same twig), they cannot be represented (along this twig) in the ordering $t_u, t_w, t_x, t_v$. 
  \begin{itemize}
    \item  If this happened, the edge-representatives (overlapping $t_u, t_v$ and $t_w, t_x$, respectively) could not overlap.
  \end{itemize}

\item Two illegal pairs cannot be represented on two distinct twigs.
  \begin{itemize}
    \item  The edge-representatives of the two illegal pairs must intersect: this is impossible if they are confined to the disjoint paths between the illegal pairs on distinct twigs. 
  \end{itemize}

\end{enumerate}

Note that by the first, second, and fifth observations above, we know that at most two subtrees corresponding to vertex-representatives may contain the branching vertex - all other vertex-representatives must to be represented on twigs.  In addition, the seventh observation tells us that illegal pairs may occur on only one twig.  We will show that there can only be one illegal pair in total, and, combined with the limited number of vertex-representatives that may contain the branching vertex, this means that at most three of the copies of the vertex set if $G$ are not nicely represented.  Therefore at leat three \textbf{are} nicely represented, and therefore $G$ is 3-colourable.  Of course, we only need one nicely represented copy, so in this case we could have included fewer copies: in a future reduction we will require all six. 

Consider a one copy of the vertices of $G$ in the set of vertex-representatices of $G''$ - we will call this single copy $V_x$.  We show that there is at most one illegal pair in $V_x$.  Recall that $G$ is 3-connected - we will finally make use of this fact!

We will first show that it is not possible to represent all of the vertices in $V_x$ as illegal pairs.  
By our fourth observation above, we know that of every illegal pair, one member must be a subtree of $t_s$, and the other must be disjoint form $t_b$.    Consider a triangle of vertices in $G$: if every edge in this triangle has an illegal pair as its endpoints, then one of the endpoints must be a subtree of $t_s$ and the other disjoint from $t_b$, an impossibility by the pigeonhole principle, therefore no triangle in $G$ can be represented entirely as illegal pairs in the representation of $G''$.  Because the only 3-connected graph on four vertices is $K_4$, we know that if $G$ has four vertices, $V_x$ is not represented entirely by illegal pairs.  If $G$ has at least five vertices, then we proceed by contradiction.  Assume that all vertices in $V_x$ are represented on the same twig (that is, every adjacent pair is an illegal pair).  Let $t_u$ be the vertex-representative that is closest to the branching node - note that $t_u$ must be a subtree of $t_s$.  Because $G$ is 3-connected, $u$ must have at least three neighbours $v, w, x$ in $G$.  All three are in illegal pairs with $u$, and so $t_v, t_w, t_x$ are all disjoint from $t_b$ along this same twig - say that $t_v$ is the closest to the branching node and $t_x$ closest to the leaf (all three must be disjoint, by our third numbered observation).  Again, because $G$ is 3-connected, $v$ must have another neighbour $z$, and it must each form an illegal pair with $v$, and therefore $t_z$ must be in $t_s$, but $t_z$ cannot be closer to the branching node than $t_u$ (by the choice of $t_u$) and must be disjoint from $t_u$ (by our third numbered observation), and therefore the illegal pair $t_v, t_z$ is nested between the illegal pari $t_u, t_x$, a contradiction to our sixth numbered observation!  We can conclude that not all adjacent pairs of vertices in $V_x$ form illegal pairs in the representation.

Now that we know there is at least one legal pair in $V_x$, we shall exploit this fact to show there is at most one illegal pair.  We proceed by contradiction:
assume that there are at least two illegal pairs, and consider all illegal pairs on a twig: one member of each is a subtree of $t_s$, and one is disjoint from $t_b$.  Then when we sweep the twig from the branching node (designated the "left") toward the leaf (designated the "right"), we encounter (before the nodes in $t_b$ that not in $t_s$) one-half of each illegal pair, and then (after the nodes in $t_b$ that not in $t_s$) the other half of each pair.  We will informally describe these as the "left" endpoint (toward the branching node) and the "right" endpoint (toward the leaf) of each illegal pair.  Of course, some subtrees may be the left endpoint of multiple illegal pairs, or the right endpoint of multiple illegal pairs.  Let $t_u$ be the rightmost left endpoint, and $t_v$ be the leftmost right endpoint.  Any vertex-representative to the left of $t_u$ or the right of $t_v$ cannot have a legal neighbour, as the edge-representative for that legal edge would be either disjoint from the edge-representative of an illegal pair to the right, or contain the edge-representative of an illegal pair to the left.  Therefore at most two vertices of all the vertices in illegal pairs may have legal neighbours, and we know that at least one vertex has a legal neighbour.  Consider then the two vertex subsets of $V_x$: those represented on the twig we have been considering (at least three, as we have at least two illegal pairs), and those represented elsewhere (of which we know there is at least one).  Our (at most) two vertices on this twig with legal neighbours elsewhere are a cutset between these two vertex subsets: a contradiction to $G$ being 3-connected.

Therefore at most one pair of vertices in $V_x$ that are adjacent in $G$ can be an illegal pair in the representation of $G_3^0$, completing the proof of the lemma.  
 \end{proof}
 
The two immediately previous lemmas together suffice to show our Theorem~\ref{t:fixedleaves} for $k=3$, i.e., subtrees in trees with 3 leaves.  Indeed, both are readily adaptable to cases of stars with larger numbers of leaves.  However, for leafage greater than three, there are more complex trees with more than one branching node, and we must adapt our reduction to block all of these branching vertices and everything between them.  This will be the work of the number of paths specified by the $u$ parameter in our $G_d^u$ graph, which we have not yet made use of.  

\subsection{The more general case} % this heading is mainly for JAE organisation, can take out in final version if needed
Because we will re-use parts of the proof from our subsequent proof on the hardness of REC-SUB-T($G$), we formulate portions of it as a lemma in the more general case:

\begin{lemma}
\label{lem:moreGeneralList}
Given a subtree overlap representation of $G''$, a $G_d^u$-blocked graph $G$, in a tree $T$ such that all branching-vertices (of $T$) are contained in $t_b$, then:
\begin{itemize}
\item At most one vertex-representative can be represented by a tree containing $t_b$ and all other vertex-representatives must be pairwise disjoint.
\item Illegal pairs must not be nested.
\item When restricting only on vertex-representatives represented on twigs, all illegal pairs must lie on the same twig.
\item The whole subgraph of $G''$ corresponding to one complete copy of $G$ cannot be represented only by illegal pairs.
\item When restricting only on vertex-representatives represented on twigs that take part on some illegal pair, except of (at most) two of them, all other must have all neighbors represented by illegal pairs.
\end{itemize}
\end{lemma}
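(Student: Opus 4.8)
The plan is to re-derive, in the more general setting of this lemma, the structural facts that were established for the single--branching--node case in the proof of Lemma~\ref{l:previous}; the five bullets here are essentially the re-packaged forms of Observations~1, 3, 5 (first bullet), 6 (second bullet), 7 (third bullet), the triangle/pigeonhole argument (fourth bullet), and the ``sweep one twig'' argument (fifth bullet) from that proof. The standing facts I would use throughout are: (i) by Lemma~\ref{l:tstbcontained} we may assume $t_s \subset t_b$ (and, when $u \ge 2$, also $t_{s'} \subset t_{b'}$); (ii) by hypothesis every branching vertex of $T$ lies in $t_b$, so $T$ with the nodes of $t_b$ removed is a forest of paths, and hence every subtree disjoint from $t_b$ lies inside a single twig of $T$; (iii) vertex-representatives form an independent set in $G''$ and are non-adjacent to $v_s$ and $v_b$, so each vertex-representative subtree is disjoint from, or nested with, each of $t_s$ and $t_b$; and (iv) the brother $f(v)$ of a vertex-representative $v$ overlaps exactly $t_s$, $t_b$ and $t_v$ among the subtrees of interest, each edge-representative $(u,v)$ overlaps exactly $t_s$, $t_b$, $t_u$ and $t_v$, and all brothers and edge-representatives induce a clique (we may also assume, by subdividing, that any of $t_s,t_b$ which contains a branching node contains all of that node's neighbours).

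For the first bullet I would first classify every vertex-representative subtree $t_w$ relative to $t_s\subset t_b$: by Rosgen's Lemma (Lemma~\ref{lem:Rosgen3.14}) and the fact that all vertices of $G_d^u$ are non-adjacent to $w$, $t_w$ cannot properly contain any $G_d^u$-subtree, so $t_w$ is one of (a) disjoint from $t_b$ (hence on a twig), (b) strictly between $t_s$ and $t_b$, (c) strictly inside $t_s$, (d) a proper supertree of $t_b$, or (e) a subtree of $t_b$ disjoint from $t_s$. Case (e) is excluded exactly as in Observation~5 via Enright's structural description of representations of $G_d^0$ \cite{Jess}; at most one subtree falls in case (d) by the brother argument of Observation~1. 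To get ``all others pairwise disjoint'' I would note that two non-disjoint vertex-representatives are nested, and then kill every remaining nested configuration using the brother of the inner subtree: the key point is that this brother overlaps $t_b$ as well as $t_s$, which forces it to overlap the outer subtree whenever the outer one lies between the inner one and $t_b$; this simultaneously rules out two subtrees of type (b), two of type (c), one of type (b) together with one of type (c), and, with the positional classification, every nesting involving a type-(a) subtree.

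The second, third and fourth bullets I would obtain by transcribing Observations~6 and~7 and the triangle argument of Lemma~\ref{l:previous}, the only change being that ``the twig'' becomes ``one of the twigs'', legitimate by standing fact~(ii): by the Observation~4 argument an illegal pair has one member inside $t_s$ and one member disjoint from $t_b$ (hence on a single twig), its edge-representative is then confined to the path between these two subtrees, and two such edge-representatives on distinct twigs cannot intersect although all edge-representatives must (being a clique); also two illegal pairs on a common twig cannot be nested, else their edge-representatives could not overlap. The fourth bullet then repeats: no triangle of $G$ can be represented entirely by illegal pairs (each edge would need one endpoint inside $t_s$ and one disjoint from $t_b$, an impossible two-colouring of $K_3$), so a $3$-connected $G$ cannot be. The fifth bullet is the sweep argument: restrict to the unique twig carrying illegal pairs, call the member of each pair inside $t_s$ its left endpoint and the member disjoint from $t_b$ its right endpoint, let $t_u$ be the rightmost left endpoint and $t_v$ the leftmost right endpoint; a vertex-representative positioned left of $t_u$ or right of $t_v$ cannot have a legal neighbour (its edge-representative would be disjoint from, or contain, the edge-representative of an illegal pair on the other side), and since every twig-represented vertex-representative in an illegal pair is such a left or right endpoint, all but the vertices of $t_u$ and $t_v$ have all their neighbours in illegal pairs.

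I expect the main obstacle to be the second paragraph: establishing ``all other vertex-representatives pairwise disjoint'' without a single distinguished branching node. In Lemma~\ref{l:previous} this relied on Observation~2 (``at most one vertex-representative is a subtree of $t_s$ containing the branching node''), which no longer has a literal meaning; replacing it needs the careful case analysis of types (b) and (c) above, and in particular the observation that every brother overlaps $t_b$ and not merely $t_s$, which is precisely what excludes the genuinely new configuration of a vertex-representative strictly between $t_s$ and $t_b$ coexisting with one strictly inside $t_s$. A secondary point to verify is that the version of Enright's lemma invoked in Observation~5 still applies when $t_b$ contains several branching nodes; should it not transfer verbatim, I would re-prove it using the $u\ge 2$ parallel paths of $G_d^u$, which the construction supplies for exactly this role.
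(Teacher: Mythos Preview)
Your plan is correct and follows the paper's approach: both proofs explicitly transcribe the observations from Lemma~\ref{l:previous}, and for bullets~2--5 your treatment is essentially identical to the paper's (indeed, the paper collapses bullets~4 and~5 into a single ``at most one illegal pair'' pointer back to the 3-connectedness argument, so your separate handling of them matches the \emph{statement} more closely than the paper's own proof does).

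The one place you diverge is the first bullet, and there you make it harder than necessary. The paper does not classify vertex-representatives into types (a)--(e), does not exclude your case~(e), and does not invoke Enright's structural lemma at all. Instead it isolates a single ``sandwich'' principle: if $t_1 \subset t_2 \subset t_3$ and a fourth subtree overlaps both $t_1$ and $t_3$, then it overlaps $t_2$. Since every vertex-representative is either disjoint from $t_b$, contained in $t_b$, or contains $t_b$, and since the brother of the inner of any nested pair must overlap both that inner subtree and $t_b$ (respectively $t_s$), one application of the sandwich principle kills every nested configuration among the vertex-representatives not containing $t_b$, and the usual brother argument shows at most one can contain $t_b$. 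In particular, your ``main obstacle'' --- whether Enright's description of $G_d^0$ representations transfers to host trees with several branching nodes --- simply does not arise: the lemma never claims case~(e) is empty, only that all non-$t_b$-containing vertex-representatives are pairwise disjoint, and the sandwich argument delivers that directly.
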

\begin{proof}
The reasoning here is essentially the same as in the case with only three leaves, but we provide quick sketches or reminders for each:
\begin{itemize}
\item 
At most one vertex-representative can be represented by a tree containing $t_b$,  and all pairs of vertex-representatives that are both disjoint from $t_b$ or both subtrees of $t_b$ must be themselves disjoint.
   \begin{itemize}
      \item Proof: Here we use the fact that, given three subtrees $t_1 \subset t_2 \subset t_3$, any fourth subtree that overlaps both $t_1$ and $t_3$ is also forced to overlap $t_2$.  
      If two vertex-representative subtrees contain $t_b$, then the brother of the larger is unrepresentable (as it must overlap both $t_b$ and the larger vertex-representative subtree, but not the subtree between them).  If two vertex-representatives are non-disjoint and are both subtrees of $t_b$, then the brother of the smaller one is unrepresentable.  Using similar reasoning, if two vertex-representatives are disjoint from $t_b$ but one is a subtree of the other, then the brother of the smaller one is unrepresentable.
   \end{itemize}
\item Illegal pairs must not be nested (as defined within Lemma~\ref{l:previous}).
 \begin{itemize}
\item Proof: As in Lemma \ref{l:previous}, if illegal pairs are nested, then the edge-representative of one will be confined to a subtree of the edge-representative of the other, impossible in a valid representation in which these must overlap.
  \end{itemize}
\item When restricting only on vertex-representatives represented on twigs, all illegal pairs must lie on the same twig.
 \begin{itemize}
\item Proof: As in Lemma \ref{l:previous}, if illegal pairs as on different twigs, their edge-representatives are restricted to subpaths of those two twigs, and therefore must be disjoint - a contradiction.  
  \end{itemize}

\item There is at most one illegal pair in the representation.
 \begin{itemize}
\item Proof: As in Lemma \ref{l:previous}, this follows from the 3-connectedness of $G$.
  \end{itemize}
\end{itemize}

\end{proof}

As we turn our attention to the more complex case of host trees of leafage more than three, our approach parallels that for the simple case:  we show that in any representation of the appropriate $G_d^u$-blocked graph, there are a strictly limited number of vertex-representatives that are not nicely represented, and that this number is strictly less than the number of copies of the vertex set of $G$ that we have included.  

To shorten our technical presentation, we present as a single lemma the result that almost immediately implies our two main theorems on recognising the overlap graphs of host trees of limited leafage, and on recognising the overlap graphs of host trees derived from a n input tree by subdivision.  

We will use the notion of a \emph{lastbranch} node; recall from earlier in this manuscript that, given a tree $T$, a lastbranch of $T$ is a node that has degree $d \geq 3$, and is incident to at least $d-1$ twigs.  Informally, a lastbranch node is the "last" branching node in the direction of all but one of its neighbours when moving out towards the leaves of the tree.   Removing a lastbranch disconnects the tree into at most one non-path tree and a collection of paths.  
While we have so far dealt mainly with host trees that have a single, unique lastbranch (the single node of degree greater than two), trees may have many lastbranches, and it is to these that we now turn.  We will show that appropriate $G_d^u$-blockers may be used to cover all lastbranches of a tree, the nodes on paths between them, leaving only twigs as the hosts for most vertex-representatives.  

\begin{lemma}\label{l:morebranch}
Let $G = (V, E)$ be a 3-connected graph, $T$ a tree with $k>3$ leaves and no lastbranch node of degree smaller than $d$.  Let $G''$ be the $G_d^{k-d+1}$-blocked graph of $G$.  Then $G''$ is the overlap graph of subtrees $\mathcal{T}$ of some subdivision of $T$, if and only if $G$ is $k$-colourable.
\end{lemma}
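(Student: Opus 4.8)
The plan is to prove both implications, modelled on the special case $k=3$ (the two lemmas preceding Lemma~\ref{l:previous}), with the $G_d^u$-blocker taking over the role played there by a single branching node. Throughout I would use that a tree in $\mathrm{SUB}(T)$ still has exactly $k$ leaves and, since subdivision preserves the set of lastbranches together with their degrees, still has no lastbranch of degree below $d$.

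\textbf{Colourable $\Rightarrow$ representable.} Given a $k$-colouring $C_1,\dots,C_k$ of $G$, I would take a subdivision $T'$ of $T$ whose twigs are long enough, place $t_s\subset t_b$ so that $t_b$ covers the \emph{skeleton} of $T'$ (the minimal subtree meeting every lastbranch and every node on a path between two lastbranches), and route the $d$ overlap-chains of the $v_s$--$v_b$ part and the $u$ overlap-chains of the $v_{b'}$--$v_{s'}$ part into the $k$ twigs --- one twig carrying the branch where $v_{b'}$ sits on a $v_s$--$v_b$ chain, so that $d-1+u=k$ twigs are used, matching the leaf count exactly --- with $t_{s'}\subset t_{b'}$ sitting inside the $u$-part and every chain confined to the portion of its twig nearest the skeleton. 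Identifying twig $i$ with $C_i$, I then place each vertex-representative of a vertex in $C_i$ as a two-node subtree in the free outer part of twig $i$, each edge-representative $(u,v)$ as the unique path through the skeleton joining $t_u$ and $t_v$ (which lie on different twigs, as $u$ and $v$ get distinct colours), and each brother $f(u)$ as a short path joining the skeleton to $t_u$. As in the $k=3$ construction this realises every adjacency and non-adjacency except that the edge-representatives and brothers, which must form a clique, need not yet pairwise overlap; I repair this exactly as there, subdividing the edges leaving the skeleton into long auxiliary paths added to $t_s$ and $t_b$ only and extending the edge-representatives and brothers along them in decreasing order of size, so that no two share a leaf and no new overlap is created.

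\textbf{Representable $\Rightarrow$ colourable.} Let a representation of $G''$ live on $T'\in\mathrm{SUB}(T)$. By Lemma~\ref{l:tstbcontained} we may assume $t_s\subset t_b$ and $t_{s'}\subset t_{b'}$, and by Lemma~\ref{l:tstbdleaves} applied to the $v_s$--$v_b$ sub-configuration (a copy of $G_d^0$) both $t_s$ and $t_b$ have at least $d$ leaves in $T'$; the same applied to the $v_{b'}$--$v_{s'}$ sub-configuration gives $t_{s'},t_{b'}$ at least $u$ leaves when $u\ge 3$, and the residual case $u=2$ (which arises only when $d=k-1$) is handled by a direct argument on the two $v_{b'}$--$v_{s'}$ chains. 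The crucial step --- and the one I expect to be the main obstacle --- is to deduce from this that \emph{every lastbranch of $T'$, and every node on a path between two lastbranches, lies in $t_b$}, which is precisely the hypothesis required to invoke Lemma~\ref{lem:moreGeneralList}. The argument I would run: if a lastbranch $p$ (of degree $\ge d\ge 3$) were outside $t_b$, then $t_b$, having at least three leaves and hence not being a subpath, would have to lie in the unique non-path component of $T'\setminus p$, and the analogous statement would hold for the $u$-part; since $v_{b'}$ sits on a $v_s$--$v_b$ chain the two halves together reach $d-1+u=k$ distinct leaves of $T'$, so combining this with the subdivision convention that a blocker subtree containing a branching node contains all its neighbours and with Lemma~\ref{l:spanbranch} forces $T'$ to have more than $k$ leaves --- a contradiction --- and the same leaf-count bookkeeping then pins the whole skeleton inside $t_b$. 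Turning this ``off-by-one'' count ($d+u=k+1$) into exactly the statement about $t_b$ requires a careful enumeration of how $t_b$, $t_{b'}$ and the short chain between them can sit relative to the branching nodes.

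Once $t_b$ contains every branching node of $T'$, Lemma~\ref{lem:moreGeneralList} applies directly: at most one vertex-representative contains $t_b$, all other vertex-representatives are pairwise disjoint and hence each confined to a single twig (every component of $T'\setminus t_b$ being a subpath of a twig), illegal pairs are not nested, all illegal pairs lie on one common twig, and --- using the $3$-connectedness of $G$ exactly as in Lemma~\ref{l:previous} --- there is at most one illegal pair. A short case analysis in the spirit of observations~2 and~5 in the proof of Lemma~\ref{l:previous}, invoking that the brother of a vertex-representative overlaps $t_s$ and $t_b$ but no other vertex-representative (and, where needed, Enright's structural fact on representations of $G_3^0$), shows in addition that all but a bounded number of the vertex-representatives contained in $t_b$ in fact lie in a twig, so at most a constant number $c<6$ of vertex-representatives are badly placed; since the unique illegal pair lies within a single one of the six disjoint copies of $V(G)$ carried by the vertex-representatives, at most three of those copies are affected, hence at least three copies $V_x$ are nicely represented with respect to $G$. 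Observation~\ref{obs:niceRepImpliesColouring}, applied with the $k$ twigs of $T'$ as colours, then yields a proper $k$-colouring of the copy of $G$ supported on such a $V_x$, hence of $G$. Together with the construction above this proves the equivalence, and with $d=3$ it is exactly what is needed to complete Theorem~\ref{t:fixedleaves}.
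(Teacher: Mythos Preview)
Your outline follows the paper's two-implication structure and correctly identifies that the hard direction is controlling where vertex-representatives can sit. However, there is a genuine gap in that direction, and it stems from aiming at the wrong structural statement.

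You set out to prove that \emph{$t_b$ contains every branching node} and then invoke Lemma~\ref{lem:moreGeneralList}. The paper does not argue this way. It first shows, via Rosgen's Lemma, that $t_s$ and $t_{b'}$ are \emph{disjoint} (and both contained in $t_b$), and then applies Lemma~\ref{l:spanbranch} to this disjoint pair to conclude that $t_s\cup t_{b'}$ already covers every branching node of $T'$. This is strictly stronger than what you aim for: it tells you that $t_s$ contains a single lastbranch and that every other branching node lies in $t_{b'}$. Your direct contradiction argument on a lastbranch outside $t_b$ never establishes this disjointness, and without it the ``off-by-one'' count you flag cannot be closed; the extra boundary node that turns $d+u=k+1$ into the $k+2$ required by Lemma~\ref{l:spanbranch} comes precisely from the overlap of $t_{b'}$ with its neighbours on the $v_s$--$v_b$ chain once you know $t_{b'}$ is separated from $t_s$.

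The second, and more serious, gap is what happens to vertex-representatives that are subtrees of $t_b$ but not on a twig. Lemma~\ref{lem:moreGeneralList} says nothing about this, and your claim that ``every component of $T'\setminus t_b$ is a subpath of a twig'' does not help, since such vertex-representatives are \emph{inside} $t_b$. You defer this to ``a short case analysis in the spirit of observations~2 and~5 of Lemma~\ref{l:previous}'', but the argument the paper actually needs here is new and does not appear in Lemma~\ref{l:previous}. Using the disjointness above, the paper shows (i) no vertex-representative can be a subtree of $t_{b'}$, because its brother must overlap $t_s$ and $t_b$ but avoid $t_{b'}$; and (ii) on the non-twig segment of $t_s$ (the path from its unique branching node toward $t_{b'}$) all but one vertex-representative can be \emph{relocated} to a free twig, via an adjustment argument: if two vertex-representatives $t_u,t_v$ sit on that segment and both have neighbours on the ``left'' twigs (through $t_{b'}$) and on the ``right'' twigs (through the lastbranch in $t_s$), then the corresponding edge-representatives are forced to be disjoint. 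This relocation step is the main new idea in the proof and has no counterpart in the $k=3$ case; without it your bounded-constant claim is unsupported, and your final count (``at most three copies affected'') does not follow. The paper's count in fact yields at most four affected copies and at least two nicely represented ones.
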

\begin{proof}
 As in the case of the host tree with three leaves, we show each direction of this implication separately.  
 
 Starting with a 3-connected graph with a given $k$-colouring, we define a canonical representation of the $G_d^{k-d+1}$-blocked graph of $G$ in which twigs represent colour classes.  In this representation $G_d^u$ gets represented over the branching nodes in such a way that $t_s$ spans the lastbranch of the smallest degree (i.e., $d$) and $t_{s'}$ and $t_{b'}$ span all other branching nodes. Vertex-representatives are represented exactly as in the previous case, i.e., the colour determines the twig they are represented on. Their brothers and edge-representatives are represented in a similar way (though they span all branching nodes instead of just the unique one), and again are adjusted to force pairwise overlapping as in the previous case. Figure~\ref{fig:repGDU} gives the intuition for this representation: we do not give full written details because it is so similar to the fully-described case for the host tree with three leaves.
\begin{figure}
\scalebox{0.35}{\includegraphics{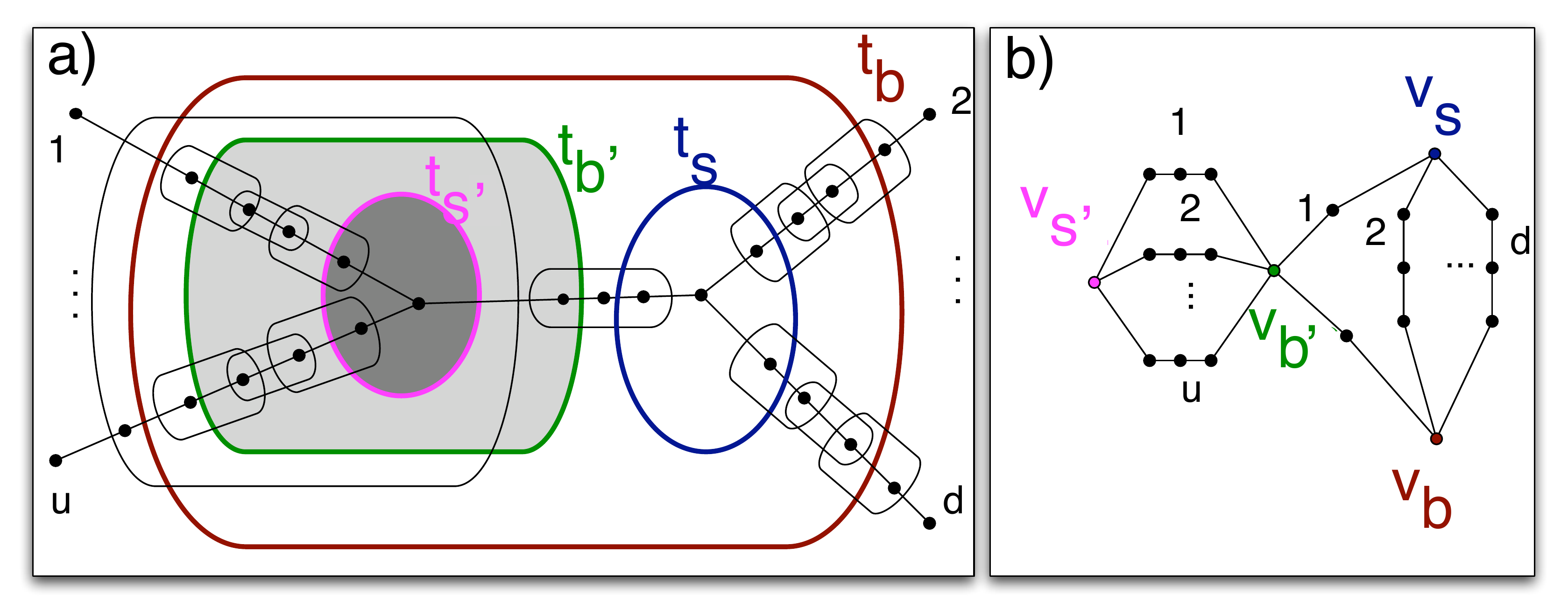}}
\caption{
A generalised overlap representation of the $G_d^u$ graph on a tree with a node such that the forest created by removing that node has two connected components:  a tree with $d$ leaves and a node of degree $d$ and a tree with at least $u + 1$ leaves.  The interior of $t_{b'}$ and $t_{s'}$ are darkened to indicate that the structure of the tree there is somewhat irrelevant - only the number of boundary nodes is important.  There is exactly one node of degree greater than two contained in $t_s$ and that node is contained only in $t_s$ and $t_b$, and  all other nodes of degree greater than two are contained in $t_{b'}$.  The representation is on the left, and the $G_d^u$ graph is on the right.  Vertex labels and corresponding subtrees are colour coded.}
\label{fig:repGDU}
\end{figure}

Proof of the converse must show that all representations are almost as nice as the canonical one - again we will show that, in any representation, there are a limited number of vertex-representatives that are not nicely represented, and therefore at least one copy of the vertex set of $G$ is nicely represented, giving a $k$-colouring for $G$.

First, we show that $t_s, t_{s'},$ and $t_{b'}$ are all subtrees of $t_b$, that $t_{s'}\subset t_{b'}$, and that $t_s$ is disjoint from both $t_{s'}$ and $t_{b'}$.  As the corresponding vertices form an independent set in $G''$, we know that these subtrees are either disjoint or contained one in another.  By Lemma \ref{l:tstbcontained}, we can assume that $t_s \subset t_b$ and $t_{s'} \subset t_{b'}$.  Combining this with the existence of a path between $t_{b'}$ and $t_s$ that avoids the neighbourhood of $t_b$, and an additional application of Rosgen's Lemma, we have that $t_s, t_{s'},$ and $t_{b'}$ are all subtrees of $t_b$.    Again, by Rosgen's Lemma, if $t_{b'}$ were a subtree of $t_s$, then $t_b$ would also be: a contradiction - therefore $t_{b'}$ (and thus also its subtree $t_{s'}$) are both disjoint from $t_s$.

 Lemma~\ref{l:tstbdleaves} gives us bounds on number of leaves of $t_s, t_b, t_{s'}$ and $t_{b'}$. As $t_{b'}$ and $t_s$ together have $k+2$ leaves, by Lemma~\ref{l:spanbranch} they span all branching-nodes. We know that $t_s$ spans at least one branching node and it has to be a lastbranch (as $t_s$ and $t_{b'}$ are disjoint). The lastbranch of minimum degree suffices for $t_s$ while all other branching nodes are covered by $t_{b'}$.
 
Now we are ready to realize that, similarly to the previous case of one branching node, at most one vertex-representative may contain $t_b$, and at most one vertex-representative may be a subtree of $t_s$ that contains the branching node (recall there is only one in $t_s$) - both are implied by Lemma~\ref{lem:moreGeneralList}.  
 
But what about the rest of the non-twig interior of $t_s$?  While this consisted only of the branching node in the 3-leaf case, here there is also a path from the branching node toward the nodes of $v_{b'}$.  We show that we can adjust any representation so that only one vertex-representative is on this path in addition to the vertex-representative containing the branching node (which will combine with the single vertex-representative that may contain the branching node in $t_s$ to tell us that we can assume at most two vertex-representatives that are a subset of $t_s$).  

Our argument for adjusting the representation hinges on the fact that if there is twig on which there are no vertex-representatives that are neighbours in $G$ of a vertex-representative $u$ we are working with, then we can move $t_u$ to the leaf-side end of that twig (extending it if needed), and adjust the brother and edge-representatives to enforce pairwise overlapping if needed.   This final brother and edge-representative adjustment is as we have described previously.  

We will consider any vertex-representatives completely contained in that special non-twig path in $t_s$, and show that all but at most one of them can be assigned to twigs in this way:  take any two vertex-representatives $t_u, t_v$ contained in this twig, where w.l.o.g $t_u$ is farther from the single branching node in $t_s$, and $t_v$ is closer to it.  We will call the twigs of the host tree that are attached at branching nodes in $t_{b'}$ the "left" twigs, and those that attach at the single branching node in $t_s$ the "right" twigs.  
Note that any edge-representative for an edge incident at $u$ must have a leaf in $t_u$, similarly with $t_v$.  If $u$ and $v$ both have neighbours in $G$ that are represented on the left twigs as well as the right twigs, then the edge-representatives between $t_u$ and neighbours on the left twigs could not intersect (let along overlap) the edge-representatives between $t_v$ and neighbours on the right twigs, due to the requirement for leaves of these edge-representatives in $t_u$ and $t_v$.  Therefore, as this is a valid representation and so all edge-representatives overlap, it must be that at least one of $t_u, t_v$ has no vertex-representative neighbours in either the left or the right twigs, so we could shift it to the end of a twig, as above. 

No vertex-representatives can be subtrees of $t_{b'}$, as their brothers would then be unrepresentable (they bust overlap the vertex representative, $t_s$ and $t_b$, but not $t_{b'}$).

As in Lemma \ref{lem:moreGeneralList}, there is at most one illegal pair.  Thus, altogether we have one vertex-representative containing $t_b$, one completely contained in a non-twig path within $t_s$, one containing the branching node of $t_s$, and one illegal pair - all the rest of the vertex-representatives are nicely represented on twigs.   Even if each of these inconvenient vertex-representatives belongs to a different copy of $G$, at least two copies are nicely represented, and therefore $G$ is $k$-colourable

\end{proof}

This Lemma ends the proof of Theorem~\ref{t:fixedleaves}. It also directly implies the following:
\begin{theorem}
REC-SUB-T($G$)  is NP-complete.
\end{theorem}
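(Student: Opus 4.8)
The plan is to derive NP-completeness of REC-SUB-T($G$) from Lemma~\ref{l:morebranch} in essentially the same way that Theorem~\ref{t:fixedleaves} follows from it. Membership in NP has already been established earlier (via Cenek's bound on the size of a minimal subtree overlap representation), so the whole task reduces to producing a polynomial-time many-one reduction from 3-CON-$k$-COL, which is NP-complete by Theorem~\ref{perm:theorem-on-3-colouring}.

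First I would fix the input tree $T_0$ of the REC-SUB-T problem; by hypothesis it has at least three leaves, say $k \geq 3$ of them. I would let $d$ be the minimum degree of a lastbranch of $T_0$ (note $d \geq 3$, and $d \leq k$), and I would form $G''$ as the $G_d^{k-d+1}$-blocked graph of the given 3-connected graph $G$ --- exactly the instance transformation used for the fixed-leafage theorem. This transformation is clearly computable in polynomial time. The claim is then that $G$ is $k$-colourable if and only if $G''$ admits a subtree overlap representation on some $T' \in \mathrm{SUB}(T_0)$.

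The two directions are handled as follows. For the forward direction, if $G$ is $k$-colourable, then Lemma~\ref{l:morebranch} already gives a representation of $G''$ on \emph{some} subdivision of $T_0$ (the canonical representation constructed in that lemma's proof spans the lastbranch of minimum degree $d$ with $t_s$ and all remaining branching nodes with $t_{b'}$, which is exactly why the parameters $d$ and $u = k-d+1$ were chosen to match $T_0$); such a subdivision is a valid witness for REC-SUB-T($G$). For the converse, suppose $G''$ is represented on some $T' \in \mathrm{SUB}(T_0)$. Subdivision does not change the number of leaves, so $T'$ still has $k$ leaves, and it has no lastbranch of degree smaller than $d$ (subdivision cannot lower a vertex's degree, and any new degree-two vertices are not lastbranches); hence $T'$ satisfies the hypotheses of Lemma~\ref{l:morebranch}, which yields a $k$-colouring of $G$. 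One should also remark, as in Theorem~\ref{t:fixedleaves}, that Lemmas~\ref{l:tstbcontained} and~\ref{l:tstbdleaves} force all branching nodes of $T'$ into $t_b$, so the hypothesis of Lemma~\ref{l:morebranch} (and of Lemma~\ref{lem:moreGeneralList}) is genuinely met by any representation.

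I do not expect a serious obstacle here: the heavy lifting --- bounding the number of badly represented vertex-representatives, the argument about the special non-twig path inside $t_s$, and the appeal to 3-connectedness to cap illegal pairs --- has all been absorbed into Lemma~\ref{l:morebranch}. The only points requiring a sentence of care are (i) checking that $\mathrm{SUB}(T_0)$ being closed under further subdivision is compatible with the ``some subdivision of $T$'' phrasing in Lemma~\ref{l:morebranch}, and (ii) confirming that the minimum lastbranch degree $d$ of the fixed tree $T_0$ is at least $3$ so that $G_d^{k-d+1}$ is well defined with $d \geq 3$ and $u = k-d+1 \neq 1$ (which holds since $u = 0$ exactly when $d = k+1$, impossible, and $u = 1$ would need $d = k$, handled by the $k=3$/star sub-case or simply by noting $u \neq 1$ is the only excluded value and $d \geq 3$ suffices). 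If $T_0$ is itself a star (so $d = k$ and there is a single branching node), the simpler Lemmas preceding Lemma~\ref{l:morebranch} --- i.e.\ the REC-T-$3$ lemmas suitably restated for $k$ leaves --- cover that degenerate case, and I would mention this explicitly so the reduction is complete for every admissible fixed $T_0$.
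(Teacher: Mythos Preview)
Your proposal is correct and follows essentially the same approach as the paper: choose $d$ as the minimum lastbranch degree of the fixed tree, build the $G_d^{k-d+1}$-blocked graph, and invoke Lemma~\ref{l:morebranch}, with the single-branching-node case handled separately via Lemma~\ref{l:previous}. The paper's own proof makes the case split (three leaves; one branching node; general) slightly more explicit, and your aside that ``$d\geq 3$ suffices'' for $u\neq 1$ is not quite right on its own---but you correctly observe that $u=1$ occurs exactly when $d=k$, i.e.\ the single-branching-node case, which you already route to the earlier lemmas.
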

\begin{proof}
If $T$ has three leaves, we are done (proceed as in the previous Theorem). If the graph has only one branching node, we proceed according to Lemma~\ref{l:previous}. Otherwise, for $T$ with $k$ leaves we find its lastbranch of the minimum degree. Let this degree be $d$. For a given graph $G$ (instance of $3$-CON-$k$-COL problem) we create $G_d^{k-d+1}$-blocked graph $G''$ and the rest follows from Lemma~\ref{l:morebranch}.
\end{proof}

Therefore now we have the following corollary:
\begin{corollary}
Given a tree $T$ and a graph $G$, to determine whether $G$ has subtree overlap representation on some subdivision of $T$ is polynomially solvable if and only if $T$ is a path. Given a number $n$ and a graph $G$, to determine whether $G$ has subtree overlap representation on a tree with at most $n$ leaves is polynomial if $n=2$ and NP-complete otherwise.
\end{corollary}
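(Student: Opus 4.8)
The plan is to split the corollary into its two assertions and, for each, to identify the NP-complete side with a theorem already established above and the polynomial side with the recognition of circle graphs. Throughout, ``polynomially solvable'' is read in the usual sense of ``not NP-hard,'' so the equivalences below hold modulo $\mathrm{P}\ne\mathrm{NP}$.

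For the first assertion I would argue as follows. If $T$ is a path then every member of $SUB(T)$ is again a path, and a subtree of a path is exactly a subpath, determined by its two end nodes; unrolling the path onto a line and reading these end nodes as the endpoints of an interval gives a correspondence under which two subpaths overlap precisely when the two intervals interleave. Hence the overlap graphs of subtrees of a path are exactly the circle graphs, so for a path $T$ the problem REC-SUB-T($G$) is just ``is $G$ a circle graph?'', which is polynomial. Conversely, a tree that is not a path has at least three leaves, which is precisely the setting of the theorem above showing REC-SUB-T($G$) to be NP-complete. Since a tree is a path if and only if it has at most two leaves, the two cases are exhaustive and the dichotomy follows.

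For the second assertion, I would first record the easy equivalence that, for any fixed $n\ge 1$, a graph is representable on some tree with at most $n$ leaves if and only if it is representable on some tree with exactly $n$ leaves: one direction is trivial, and for the other, given a representation on a host with $j<n$ leaves we attach $n-j$ pendant paths at an internal node of the host (such a node exists whenever the graph has an edge, since a host with at most two nodes carries no overlapping subtrees, and edgeless graphs are handled directly), leaving every subtree and hence the overlap graph unchanged while raising the leaf count to $n$. Thus the stated problem with ``at most $n$ leaves'' coincides with REC-T-$n$ as defined above. For $n\le 2$ a host with at most $n$ leaves is a path (or a single vertex), so by the circle-graph correspondence of the previous paragraph the problem is again circle-graph recognition, hence polynomial; for $n\ge 3$ it is NP-complete by Theorem~\ref{t:fixedleaves}. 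Combining the two ranges yields the claimed ``polynomial for $n=2$, NP-complete otherwise.''

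I do not anticipate a genuine obstacle here: all the NP-hardness is inherited from the earlier theorems and the remaining ingredients are routine bookkeeping. The point that deserves the most care is the circle-graph correspondence --- making precise that ``subtree of a path'' may be taken to be ``subpath,'' and that it is the \emph{overlap} relation (intersecting with neither containing the other), not mere intersection, that matches interleaving of intervals --- together with the observation, needed for the padding step, that the $G_d^u$-blocked instances always force the host to contain an internal node so that padding is available.
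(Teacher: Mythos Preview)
Your proposal is correct and follows exactly the approach the paper implicitly takes: the NP-complete cases are read off directly from Theorem~\ref{t:fixedleaves} and the REC-SUB-T theorem, while the polynomial cases reduce to circle-graph recognition (as the paper notes in its abstract). The paper does not spell out the padding argument or the circle-graph correspondence, so your added care on those points is welcome and not a deviation.
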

This concludes the part dedicated to subtree overlap representations and we move to subpath representations.

% JAE leftoff
%\newpage
% PerM: Leftoff removal (now looked ugly)!
		%OK (read)
\section{Paths in a Tree}
We now turn to the complexity of REC-PMD-$k(G)$: the problem of recognising overlap graphs of paths in a tree of maximum degree $k$.  

Let $G = (V, E)$ be a connected graph: for fixed integer $k \geq 3$ we will produce a graph $G'' = (V'', E'')$ such that $G''$ is the overlap graph of paths in a tree with maximum degree $k$ if and only if $G$ is $k$-colourable.

Our construction is similar to that for overlapping subtrees, but do not require quite as much machinery in the blocking construct.  

Let the $\emptyset$-blocked graph of a graph $G$ be the $G_d^u$-blocked graph without $V_4$ or any of the edges with an endpoint in $V_4$ - that is, it is just the vertex-representatives, the edge-representatives, and the brothers, without $G_d^u$ itself.

\begin{lemma}
Given a 3-connected graph $G$, its $\emptyset$-blocked graph has a subpath-overlap representation in a tree with maximum degree $k$ if and only if $G$ is $k$-colourable.
It also holds that the $\emptyset$-blocked graph of $G$ has a subpath intersection representation in a tree with maximum degree $k$ if and only if $G$ is $k$-colourable.
\end{lemma}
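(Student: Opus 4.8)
The plan is to prove all four implications (overlap and intersection, each direction) at once. The representation built for the ``if'' direction will have no two nested paths, so its overlap graph coincides with its intersection graph; and the Helly property of subtrees of a tree lets the two ``only if'' directions be argued simultaneously.

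\emph{From a $k$-colouring to a representation.} Fix a $k$-colouring of $G$, use it on each of the six disjoint copies of $G$ inside $G''$ (the $\emptyset$-blocked graph of $G$), and let $C_i$ be the resulting colour classes of $G'$. Build $T$ from a central node $c$ by attaching, for each nonempty $C_i$, a long ``leg'' that is a caterpillar: a (subdivided) path $c\,c_i^1 c_i^2\cdots$ with a (subdivided) pendant branch $B_i^j$ hanging at each spine node $c_i^j$. Then every spine node has degree $3$ and $c$ has degree at most $k$, so $\Delta(T)\le k$; this is where $k\ge 3$ is used. Put the $j$-th vertex $u$ of $C_i$ at $c_i^j$ and let $t_u$ be a subpath of $B_i^j$ avoiding $c_i^j$, so vertex-representatives are pairwise disjoint. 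Route the brother $f(u)$ from inside $B_i^j$, through $c_i^j$, along the spine, ending at $c$; route the edge-representative of an edge $uv$ (say $u$ at $c_i^j$, $v$ at $c_{i'}^{j'}$, $i\neq i'$) from inside $B_i^j$, through $c$, into $B_{i'}^{j'}$. Every brother and edge-representative contains $c$, so they pairwise intersect; the spine part of each passes \emph{through} spine nodes but never enters a pendant branch other than the one(s) holding its own vertex-representative(s), so each meets exactly the vertex-representatives it should while vertex-representatives stay pairwise disjoint. Finally, any two brothers/edge-representatives are non-nested: if they reach different unordered pairs of legs this is automatic; if they reach the same pair of legs they still dip into distinct pendant branches (distinct edges have distinct endpoint pairs), or else --- for a brother versus an edge-representative incident to its vertex --- one lies inside a single leg that the other leaves while also being dipped deeper into the shared pendant; either way neither contains the other. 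Hence this one representation realises $G''$ both as a subpath overlap graph and as a subpath intersection graph in a tree of maximum degree $k$.

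\emph{From a representation to a $k$-colouring.} Suppose $G''$ is represented (overlap or intersection) by paths of a tree $T$ with $\Delta(T)\le k$. Since $V_2\cup V_3$ is a clique in $G''$, its paths pairwise intersect, and being subtrees of a tree they have a common node $w$ by the Helly property; let $A_1,\dots,A_d$ be the components of $T-w$, so $d=\deg_T(w)\le k$, and use the $A_\ell$ as colour classes. For a vertex-representative $t_u$ not containing $w$ set $\chi(u)=\ell$ where $t_u\subseteq A_\ell$. Now mirror Lemmas~\ref{l:previous}, \ref{lem:moreGeneralList} and \ref{l:morebranch} with $w$ playing the role of the blocking node: (i) at most two vertex-representatives contain $w$ --- if $t_u\subseteq t_v$ both did (they must nest, as $V_1$ is independent), the brother $f(u)$, containing $w$, would meet $t_v$, which is impossible in the intersection model and in the overlap model forces $f(u)\subseteq t_v$; applied to a chain of three such representatives this forces two brothers to nest, contradicting that brothers pairwise overlap; (ii) ``illegal edges'' --- edges $uv$ of $G$ with $t_u,t_v$ avoiding $w$ and lying in the same component $A_\ell$ --- are constrained just as the illegal pairs of Lemma~\ref{l:previous}: they cannot nest, they cannot lie in two different components, and $3$-connectivity of $G$ then bounds their number within a single copy of $V$. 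So only a constant number of all vertex-representatives, in particular fewer than six, are placed badly, whence at least one of the six copies of $V$ is placed so that $\chi$ is defined on all of it and separates $G$-adjacent vertices: that is a proper $d$-colouring of $G$, hence a $k$-colouring. Together with the forward direction this proves the lemma for both the overlap and the intersection versions.

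\emph{Where the difficulty lies.} The construction and the accompanying reduction from 3-CON-$k$-COL are routine; the work is entirely in the second direction, and above all in item (ii). In the subtree reductions the sandwich $t_s\subset t_b$ supplied by the $G_d^u$-blocker both confined the edge-representatives and separated the two ends of an illegal pair, and the ``illegal pairs are few'' observations of Lemma~\ref{l:previous} leaned on it; here the only structure available is the single Helly node $w$ of the clique $V_2\cup V_3$, so those observations must be re-established using just $w$, the pairwise overlapping of edge-representatives, and the fact that each edge-representative meets exactly its two vertex-representatives --- together with a verification that the components of $T-w$ behave like the twigs of the earlier proofs. I expect essentially all of the technical content of the lemma to sit there.
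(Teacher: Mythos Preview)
Your forward construction is essentially the paper's: a central node with caterpillar legs, vertex-representatives on pendant branches, brothers and edge-representatives routed through the centre. Fine.

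The backward direction, however, diverges from the paper and this is where you go astray. You plan to mirror Lemmas~\ref{l:previous}--\ref{l:morebranch}, bounding the number of \emph{illegal pairs} via non-nesting, single-twig confinement, and 3-connectivity, and you flag this as ``where the difficulty lies''. In fact the paper shows that in the path setting there are \emph{no} illegal pairs at all among vertex-representatives that avoid the Helly node $c$. The argument is short and uses precisely the feature that distinguishes this lemma from the subtree case: every brother and every edge-representative is a \emph{path}. If $t_u,t_v$ lie in the same component of $T-c$ with $uv\in E(G)$, the edge-representative is a path through $c$ meeting both; one of $t_u,t_v$ (say $t_v$) has its closest-to-$c$ point farther out, so the unique $c$--$t_v$ path already passes through $t_u$'s closest point. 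The brother $f(v)$ also contains $c$ and reaches $t_v$, hence contains that same $c$--$t_v$ path and therefore meets $t_u$ --- in the intersection model this is already a contradiction, and in the overlap model one checks (using that the edge-representative genuinely overlaps $t_u$, so $t_u$ sticks out of the $c$--$t_v$ path) that $f(v)$ overlaps $t_u$ too. Either way $f(v)$ is adjacent to $u$, impossible.

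So the only bad vertex-representatives are the at most two that can contain $c$ (zero in the intersection model), and one nicely represented copy of $V(G)$ exists immediately. Your item~(ii) is not the hard part; it evaporates. Moreover, your sketch of (ii) leans on structure you no longer have: in the subtree proofs the ``one side in $t_s$, other side outside $t_b$'' dichotomy for illegal pairs, the non-nesting, and the 3-connectivity cut argument all rested on the sandwich $t_s\subset t_b$ supplied by the $G_d^u$-blocker. With only the Helly node $w$ you would have to rebuild that dichotomy from scratch before the counting could even start --- work you have not done, and work the paper shows is unnecessary.
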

\begin{proof}
Let $G''$ be the $\emptyset$-blocked graph of $G$.

It is probably not surprising to the reader that we create a canonical representation and show that all representations must be very similar to it. 

Again, we start with showing the representation from the $k$-colouring: first assume that $G$ is $k$-colourable, and let $C_1, ... C_k$ be the colour classes. Now we describe the host tree that we will use to represent $G''$.  We start with a star with $k$ leaves, and replace each of its leaves with an arbitrary subtree of maximum degree three with $n+1$ twigs of length at least 3, where $n$ is the number of vertex-representatives in $G''$ (a long path with pendant twigs is an example of such a tree).  Each of these subtrees will correspond to a colour class in the $k$-colouring.  
We represent each vertex in $C_i$ with a subtree composed of a leaf, its neighbour, and its neighbour's neighbour (that is, a path of length 3 that is a leaf at one end) on the subtree corresponding to $C_i$ - note that these will all be disjoint.  We represent the brother of each vertex-representative by the path from the middle vertex of that vertex-representative's subpath to the central vertex of degree $k$.  We represent each edge-representative with the path between the nodes in its two vertex-representatives that are closest to the central vertex.  This completes our representation (a schematic example version can be found in Figure \ref{fig:p3subpaths}). 

To show that it is valid, note that all vertex-representatives are disjoint, and because the vertex-representatives are all on the ends of twigs, the edge-representatives and brothers are disjoint from all but the vertex-representatives they are concerned with, which they overlap.  Now, because every edge-representative contains a node that is in the subpath of its endpoints, we know that no edge-representative is contained in any other, and they all pairwise intersect, as (because the subtrees correspond to colour classes) every edge-representative is between vertex-representatives on different subtrees, and they all therefore contain the central vertex.  Every brother is defined as containing the central vertex, so all brothers are also pairwise intersecting all other brothers and all edge-representatives.  Because the brothers contain a node closer to the leaf of their appropriate vertex-representative, but do not reach any other subtree, they also are all pairwise overlapping with all other brothers and all edge-representatives.  We can then conclude that this is both an overlap and intersection representation for $G''$.

It remains to show that all representations are almost as nice as this one, and thus that of there is a representation, then $G$ is $k$-colourable.  Here, we try to remain general with respect to it being an overlap or an intersection representation, and when we cannot be, we explain both cases.

Assume that there is an overlap or intersection representation of $G''$ on a host tree with maximum degree $k$.  Due to the Helly-property of subtrees (and therefore subpaths) in a tree, whenever each pair mutually intersects (and to overlap, they have to intersect), there exists a vertex in common between all of these subpaths.  We apply this on the set of edge-representatives and brothers and take a common intersection vertex of highest degree and denote it $c$, which must have degree at most $k$.  
Consider the forest formed by removing $c$ from the host tree: we call the components of this $C_1, C_2, ... C_k$, and will use them to refer to parts of the entire host tree.  Let $u, v$ be two vertices that are adjacent in $G$; we will show that copies of them from the same copy of the vertex set of $G$ cannot be represented on the same one of $C_1, C_2, ... C_k$ in our representation.  We proceed by contradiction: assume that they are represented in the same component; think of the edge-representative of the edge between them in $G$.  It must contain $c$, as well as intersect both of $t_u$ and $t_v$, and it is a path, so we know that one of $t_u$, $t_v$ is "farther" from $c$ - w.l.o.g let it be $t_v$.  Then consider the brother of $t_v$: if it intersects $t_v$ and also contains $c$ (like the edge-representative), then it will also intersect $t_u$ exactly where the edge representative does (and overlap $t_u$, if the edge-representative does), a contradiction.  We can conclude that no two vertex-representative neighbours of a single edge-representative can be represented in the same one of $C_1, C_2, ... C_k$

Now we show that at most two vertex-representatives can contain $c$: first, if we are dealing with an intersection representation, this is clearly impossible, as anything containing $c$ would be adjacent in $G''$ with all edge-representatives and brothers (as well as the other vertex-representatives containing $c$).  If we are dealing with an overlap representation, then consider the brothers of the vertex-representatives.  If there are three vertex-representative subtrees $t_u, t_v, t_w$ that contain $c$, then they must all be in pairwise containment relationships, say $t_u \subset t_v \subset t_w$: then the brother of $t_v$ is unrepresentable, as it will be forced to overlap at least one of $t_u, t_w$ by overlapping $t_v$, or prevented from overlapping the other brothers.  

Then all but two vertex-representatives are represented on one of $C_1, C_2, ... C_k$, and no two that are adjacent (in $G$) are in the same one, therefore at least one copy of $G$ is nicely represented, and $C_1, C_2, ... C_k$ give a $k$-colouring for $G$

\end{proof}

\begin{figure}
\scalebox{0.4}{\includegraphics{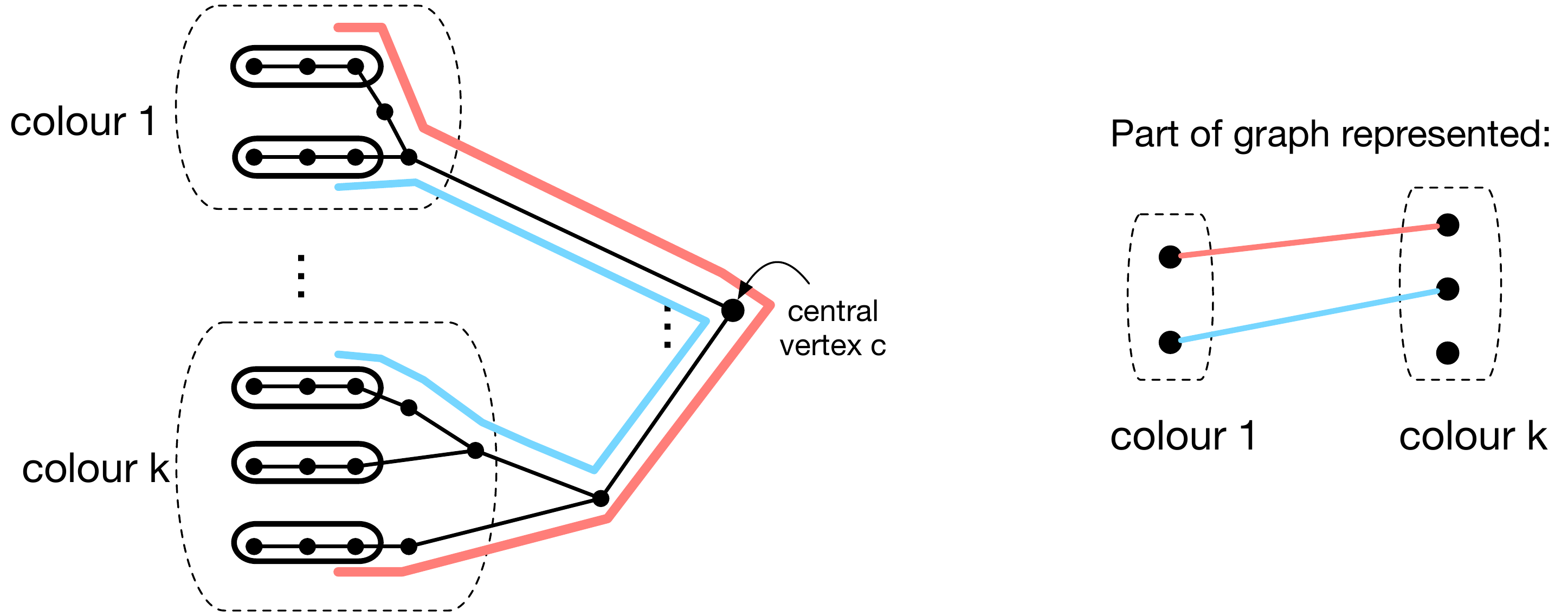}}
\caption{Sample subpath overlap representation on a tree of maximum degree $k \geq 3$. Black ovals are vertex-representatives, the two pale coloured subtrees are edge-representatives. Brothers of vertex-representatives are omitted. Colour classes are shown in dotted ovals.  The part of the graph that is represented is shown on the right.}
\label{fig:p3subpaths}
\end{figure}
So we proved the following two theorems:
\begin{theorem}
REC-PMD-$k(G)$ is NP-complete.
\end{theorem}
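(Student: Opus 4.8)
The plan is to obtain the theorem almost immediately from the preceding lemma together with the NP-completeness of 3-CON-$k$-COL (Theorem~\ref{perm:theorem-on-3-colouring}). So this will be a short, two-part argument: membership in NP, and a polynomial-time reduction.

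First I would dispatch membership in NP. REC-PMD-$k$ is a decision problem, and for a yes-instance a minimal subpath-overlap representation in a tree of maximum degree $k$ is in particular a minimal subtree-overlap representation, hence by Cenek~\cite{eowynThesis} of size polynomial in the input graph and checkable for correctness in polynomial time; it therefore serves as a certificate, exactly as already noted for REC-SUB-T and REC-T-$k$.

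For hardness I would give a polynomial-time many-one reduction from 3-CON-$k$-COL$(G)$. Given a $3$-connected graph $G=(V,E)$, compute its $\emptyset$-blocked graph $G''$: this has vertex set $V_1\cup V_2\cup V_3$, where $V_1$ consists of six disjoint copies of $V$ (the vertex-representatives), $V_2$ of six disjoint copies of $E$ (the edge-representatives), and $V_3$ of the brothers of $V_1$, with edge set $E_1\cup E_2\cup E_3$ exactly as in the definition of the $G_d^u$-blocked graph but with $V_4$ and all edges incident to $V_4$ deleted. Since $|V''|=O(|V|+|E|)$ and $|E''|=O((|V|+|E|)^2)$, this transformation is computable in polynomial time. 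By the immediately preceding lemma, $G''$ has a subpath-overlap representation in a tree of maximum degree $k$ if and only if $G$ is $k$-colourable; as the reduction thus carries yes-instances to yes-instances and no-instances to no-instances, REC-PMD-$k$ is NP-hard, and combined with membership in NP it is NP-complete. I would additionally observe that the preceding lemma establishes the very same equivalence for subpath \emph{intersection} representations, so the identical reduction simultaneously shows that recognising the intersection graphs of paths in a tree of fixed maximum degree $k\ge 3$ is NP-complete, in contrast with the polynomial-time recognisability of intersection graphs of subpaths in an unrestricted tree~\cite{gavril1978}.

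The main obstacle is that there is essentially no new obstacle at this level: all of the difficulty — building a canonical representation from a colouring, and arguing via the Helly property of subpaths in a tree together with the behaviour of the brothers and edge-representatives that any representation is nice up to at most two badly placed vertex-representatives — has already been absorbed into the preceding lemma. The only points requiring any care are that the $\emptyset$-blocked graph is genuinely constructible in polynomial time (it is, being of quadratic size), and that the $3$-connectedness hypothesis that the preceding lemma relies on is precisely what is guaranteed by taking the source problem to be 3-CON-$k$-COL rather than ordinary $k$-colouring.
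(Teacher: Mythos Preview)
Your proposal is correct and matches the paper's approach exactly: the paper derives the theorem immediately from the preceding lemma (together with NP-membership via Cenek and NP-hardness of 3-CON-$k$-COL), with the $\emptyset$-blocked graph as the polynomial-time reduction. Your additional remark about the intersection variant is likewise precisely what the paper states as the companion theorem.
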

\begin{theorem}
For any $k\ge 3$, it is NP-complete to recognise whether a given graph $G$ has a subpath intersection representation in any tree with maximum degree $k$.
\end{theorem}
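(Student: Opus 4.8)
The plan is to observe that this theorem is an almost immediate consequence of the intersection half of the preceding lemma on $\emptyset$-blocked graphs, combined with our source hardness result. First I would dispatch membership in NP: a family of paths in a tree whose intersection graph is $G$ serves as a certificate, and since paths are in particular subtrees, Cenek's bound on the size of a minimal subtree representation (used already for the NP-membership of REC-PMD-$k$) applies here, so whenever such a representation exists there is one of size polynomial in $|G|$ that can be verified in polynomial time.

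For hardness I would give a polynomial-time many-one reduction from 3-CON-$k$-COL, which is NP-complete by Theorem~\ref{perm:theorem-on-3-colouring}. Given a 3-connected graph $G$, form its $\emptyset$-blocked graph $G''$ (six disjoint copies of $G$ together with the edge-representatives and the brothers of the vertex-representatives, with the clique and incidence edges as specified); this takes time polynomial in $|G|$. The intersection half of the preceding lemma states precisely that $G''$ has a subpath intersection representation in a tree of maximum degree $k$ if and only if $G$ is $k$-colourable, so $G \mapsto G''$ is the required reduction, and NP-hardness follows.

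There is no genuinely hard step left, because all the substantive content was carried out inside the proof of that lemma, where the intersection case was treated in parallel with the overlap case and is in fact the easier of the two: in an intersection representation no subtree may contain the common Helly node $c$, so one does not even need the brother-based argument to bound the number of vertex-representatives passing through $c$. The only point I would take care to verify explicitly is that the canonical representation genuinely respects the degree constraint: it uses a star with $k$ leaves, each leaf replaced by a tree of maximum degree three, so the host tree has maximum degree $\max(k,3)=k$, and the reduction does not covertly require a branching node of degree exceeding $k$. With that checked, the theorem is proved, and the same reasoning simultaneously yields the overlap statement REC-PMD-$k(G)$.
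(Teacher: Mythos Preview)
Your proposal is correct and follows exactly the paper's approach: the theorem is stated there as an immediate corollary of the intersection half of the preceding lemma on $\emptyset$-blocked graphs, combined with the NP-completeness of 3-CON-$k$-COL, with no additional argument. One small caveat on your NP-membership paragraph: Cenek's polynomial bound is for subtree \emph{overlap} representations, so citing it for the intersection case is not quite on the nose; a cleaner justification is that intersection graphs of paths in a tree are chordal, hence admit a clique-tree (and in fact a path) representation on a tree with at most $n$ nodes, which is easily checked and still respects the degree bound if any representation does. The paper itself glosses over this point entirely, so your treatment is already more careful than the original.
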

As for other trees both problems are polynomially solvable (circle- and interval graphs, respectively), we have a dichotomy for these two problems.
		%Read!!!
\section{Conclusion and Future Work}

We have shown that recognising a number of subclasses of subtree overlap graphs is NP-complete.  This is surprising because of the extreme enforced simplicity of the representations.  Our ultimate goal continues to be resolving the complexity of the recognition problem for subtree overlap graphs in general.  This is currently an open problem.

Other related open problems include tighter bounds on the complicacy of subtree overlap graphs, as well as investigation of other geometric overlap and intersection classes. Also a challenging problem related to subtree overlap graphs (a. k. a. subtree filament graphs) is weighted clique (as Gavril's algorithm a bit surprisingly covers just maximum weighted clique in this case).

Our constructions produce very dense graphs, i.e., graphs with large cliques. A reasonable question is what makes these classes hard to recognise. Is it a presence of large cliques or short cycles like for polygon-circle graphs, or not -- like for, e.g., segment graphs? Is it a large maximum degree, or do these graphs remain hard even for reasonably low degrees, like circle graphs? How about near-planarity?
		%OK (maybe extend)
%\input{acknowledgments}		%OK
{%
\bibliographystyle{plain}
\bibliography{sog}}

\end{document}